\definecolor{White}{rgb}{1,1,1} %
\definecolor{Black}{rgb}{0,0,0} %
\definecolor{LightGray}{rgb}{.8,.8,.8} %
\colorlet{ChannelColor}{LightGray} %
\colorlet{ChannelTextColor}{Black} %
\colorlet{ReadoutColor}{White} %
\newtheorem{theorem}{Theorem} %
\newtheorem{lemma}[theorem]{Lemma} %
\theoremstyle{definition} %
\newtheorem{definition}[theorem]{Definition} %
\theoremstyle{remark} %
\newtheorem{remark}{Remark} %
\newcommand{\norm}[1]{\ensuremath{\left\lVert #1 \right\rVert}} %
\newcommand{\abs}[1]{\ensuremath{\left\lvert #1 \right\rvert}} %
\newcommand{\complex}{\mathbb{C}} %
\newcommand{\et}{\textsc{ET}}
\newcommand{\nolines}[1]{\multicolumn{1}{c}{#1}}
\DeclareMathOperator{\tr}{Tr} %
\DeclareMathOperator{\pr}{Pr}
\begin{document}

\title{A classical proof of quantum knowledge for\\ multi-prover interactive proof systems}

\author[1]{Anne Broadbent\footnote{abroadbe@uottawa.ca}}
\author[2]{Alex B. Grilo\footnote{alex.bredariol-grilo@lip6.fr}}
\author[1]{Nagisa Hara\footnote{nhara@uottawa.ca}}
\author[1]{Arthur Mehta\footnote{amehta2@uottawa.ca}}

\affil[1]{Department of Mathematics and Statistics\protect\\
  University of Ottawa, Canada\vspace{2mm}}
\affil[2]{Sorbonne Universit\'e, CNRS, LIP6, France.\vspace{2mm}}
\date{}

\renewcommand\Affilfont{\normalsize\itshape}
\renewcommand\Authfont{\large}
\setlength{\affilsep}{6mm}
\renewcommand\Authsep{\rule{10mm}{0mm}}
\renewcommand\Authands{\rule{10mm}{0mm}}

\maketitle

\thispagestyle{empty}
\begin{abstract}
    In a  proof of knowledge (PoK), a verifier becomes convinced that a prover possesses privileged information.  In combination with zero-knowledge proof systems, PoKs play an important role in security protocols 
    such as in digital signatures and authentication schemes, as they enable a prover to demonstrate possession of certain information (such as a private key or a credential), without revealing it.
    A PoK is formally defined via the existence of an extractor, which is capable of reconstructing the key information that makes a verifier accept, given oracle access to any accepting prover.

    We extend this concept to the setting of a single classical verifier and multiple quantum provers and present the first  statistical zero-knowledge (ZK) PoK proof system for problems in~$\QMA$. To achieve this, we establish the PoK property for the ZK protocol of Broadbent, Mehta, and Zhao (TQC 2024), which applies to the local Hamiltonian problem. More specifically, we construct an extractor which, given oracle access to a provers' strategy that leads to high acceptance probability, is able to reconstruct the ground state of a local Hamiltonian. Our result can be seen as a new form of self-testing, where, in addition to certifying a pre-shared entangled state, the verifier also certifies that a prover has access to a quantum system, in particular, a ground state; this indicates a new level of verification for a proof of quantumness. 
\end{abstract} 

\newpage
\tableofcontents

\newpage

\section{Introduction}
\label{sec:intro}

Certification of quantum resources is a key task with consequences ranging from fundamental aspects of science to important practical problems. One of the most successful models for certification of quantum resources are non-local games, where two or more isolated quantum parties are tested based on a question-answer interaction with a classical verifier.

The significance of non-local games lies in their power to demonstrate that classical resources are insufficient in describing nature, such as in the experimental realization of the violation of Bell inequalities~\cite{Aspect,HBD+15}. Non-local games were also used recently in protocols of practical interest, such as device-independent cryptography~\cite{VV14b} and verification of quantum computation by classical devices~\cite{RUV13}.

These applications rely on a key property of some nonlocal games called {\em self-testing}, or rigidity. 
When this property holds for a non-local game, 
then whenever the players achieve a close-to-optimal winning probability, it follows that we can fully characterize the shared strategy. More concretely, we can prove that they must have shared a prescribed quantum state, and applied some prescribed observables during the game, up to some local change of basis. This strong level of control is a crucial step in showing the 
security of many protocols.

While self-testing techniques allow the  characterization of the entanglement shared by parties, they are not directly helpful in certifying that one of the parties  locally holds a target quantum state; what is more, these techniques necessarily allow a freedom of local isometries, rendering this type of result inconclusive in the characterization of local systems. 

Recent work has established the verification of quantum computations~\cite{Gri19,BMZ24} in the two-prover setting of a specific form: in the honest strategy, one of the parties holds a target quantum state, and teleports it to the second party, who measures it. While these works are able to conclude the \emph{existence} of a target quantum state (\emph{e.g.}~a  $\QMA$\footnote{$\QMA$ is the quantum analogue of~$\NP$~\cite{AN02arxiv}.} witness), a key question remains: 
\begin{center}
    \emph{Could the verifier ascertain that the parties actually hold such quantum state?}
\end{center}
Moreover, one could add the additional constraint that this certification be achieved without the verifier gaining any additional information, a property known as \emph{zero-knowledge} (ZK).

In this work, we answer this key question and its ZK variant, both in the positive. To this end, we revisit the cryptographic notion of {\em proof of quantum knowledge}~\cite{BG22,CVZ20} in the non-local game scenario. More concretely, we provide a formal definition of this task, and we show that a variant of the protocol~\cite{BMZ24} for verification of quantum computations  satisfies this   definition.

In order to present our results in more detail, we first overview the key concepts of zero-knowledge proofs and proofs of knowledge. 

\subsection{Proof systems: zero-knowledge and proofs of knowledge}

A classical interactive proof system is a protocol between two parties: an unbounded prover and a polynomial-time verifier. In this protocol, the prover aims to convince the verifier that, for a given~$x$, there exists a $w$ such that $(x,w) \in R$. The protocol must satisfy two key properties: completeness, which states that if $x$ is a yes-instance, the prover can convince the verifier of this fact; and soundness, which states that if $x$ is a no-instance, no prover can convince the verifier except with negligible probability. In this work, we focus on interactive proof systems where a classical verifier interacts with multiple quantum provers who are allowed to share entanglement but not to communicate during the protocol.

\emph{Zero-knowledge} (ZK) is an additional property to completeness and soundness and is of particular relevance in cryptographic scenarios: ZK captures the counter-intuitive notion that an interactive proof system can satisfy  completeness and soundness, while also completely
concealing  the proof and the  inner-workings of the prover. ZK is formalized by the existence of an efficient \emph{simulator}, who is able to emulate the view of any cheating verifier interacting with an honest prover on a positive instance. 
Here, we are interested in statistical ZK, meaning that even an unbounded distinguisher cannot discern between the two cases.
In the setting of $\MIP^*$, where we have proof systems with multiple entangled provers, \cite{GSY19,MS24} showed that every language which can be decided with an $\MIP^*$ protocol (with unbounded provers) admits a two-prover one-round zero-knowledge protocol. Moreover \cite{BMZ24} show that in the $\MIP^*$ setting, we can achieve verifiable delegation of quantum computation in zero-knowledge, extending previous results of verification of quantum computation in these models~\cite{RUV13,CGJV19,Gri19}.

A complementary property to ZK is the concept of \emph{proof of knowledge} (PoK)~\cite{GMR89,TW87,FFS88,BG93}.
This is a strengthening of the soundness condition, where the verifier not only becomes convinced of the existence of a witness but also that the prover actually ``knows" the witness.  PoK is crucial in some applications of ZK proofs, such as anonymous credentials~\cite{Cha83}: if Alice wants to authenticate herself to an online service using her private credentials using a ZK proof, we do not only want to be convinced that such credentials exist, but that Alice holds them!
More formally, a PoK for an $\NP$-relation $R$ is defined by establishing the existence of an efficient \emph{extractor}, which when given black-box access to a prover that makes the verifier accept with high enough probability, the extractor %
can efficiently compute a witness $w$ such that $(x, w) \in R$~\cite{BG93}.
We remark that for classical proof systems, PoK is typically considered together with 
ZK, since otherwise a trivial way to achieve PoK is to let the prover send~$w$ over the communication channel.

Recent works have generalized the concept of PoKs to the quantum setting, and in particular to  $\QMA$ problems. Concretely, \cite{CVZ20, BG22} define a Proof of Quantum knowledge (PoQ) in the fully quantum setting for a quantum verifier and a single quantum prover with a quantum witness. For this, they formalize the notion of a $\QMA$-relation as a natural quantum analogue to the concept of an $\NP$-relation and show that all problems in $\QMA$ admit a PoQ proof system. Furthermore, \cite{VZ21} introduces the notion of a \emph{classical} proof of quantum knowledge (cPoQ), considering PoQs where the verifier is classical. They show several examples of protocols satisfying cPoQ, and in particular they prove the existence of classical \emph{arguments} of quantum knowledge  for any $\QMA$-relation,\footnote{An argument system is similar to a proof system, but soundness is only guaranteed against bounded malicious provers.} under the quantum-hardness of LWE assumption.
We note that, in the quantum case (contrary to the classical case), PoQ can be considered independently of ZK; for example,  the trivial proof system of sending a quantum witness is disallowed in the  model with a classical verifier,  and hence establishing PoK without ZK becomes meaningful as in \cite{VZ21}. We further note that, to the best of our knowledge, the notion of 
PoQ %
has never been considered in the multi-prover setting.

\subsection{Contributions and related work}

In this work, we consider for the first time cPoQs 
to the setting of multiple non-communicating entangled provers. 
In alignment with prior work on cPoQ \cite{VZ21}, we note that key to establishing a quantum witness in the situation of classical communication only is to consider an extractor which has coherent black-box access to the prover’s unitary operation and can place a superposition of classical messages in the message register.

We highlight that our work has interesting interpretations which are unique to the multi-prover setting. Indeed, part of the motivation for this work comes from an open question in~\cite{BMZ24}, which asks for a two-prover one-round protocol that self-tests for ground states of a local Hamiltonian. We do not believe that games considered in~\cite{Gri19,BMZ24}, nor in this work can formally be considered as self-tests for the ground state. Although our extractor makes use of self-testing results as a component, it also requires the use of non-local gates in order to share teleportation keys. More generally our definition of cPoQ differs from self-testing for two reasons: (1) the extractor needs to be efficient, whereas the isometries in self-testing are not required to be efficient; and (2) the extractor is not required to act locally. In some sense, these concepts are incomparable directly with one another. Instead, we suggest viewing these two notions as capturing complementary ways of exerting control over an untrusted quantum system --- highlighting how our framework offers a unifying perspective on verification in non-local quantum settings.

Our main result is to give the first ZK-cPoQ for $\QMA$:
\begin{theorem}[Informal statement of \Cref{thm:main}]
Fix any $\QMA$ problem $A=(A_{yes} ,A_{no})$.  We give an explicit two-prover entangled proof system such that:
\begin{enumerate}
    \item \label{item:1}In the honest strategy, the provers are efficient if they are given access to a $\QMA$ witness.
    \item \label{item:2}It is statistical zero-knowledge.
    \item \label{item:3} If the provers $P^*$ with some strategy $\mathcal{S}$ make the verifier accept with sufficiently high probability, then there exists an extractor $E$ given oracle access to $\mathcal{S}$ that outputs a quantum state $\zeta$ which will be accepted with high probability by a quantum verifier for $A$.
\end{enumerate}
\end{theorem} 

We remark that our main contribution is to achieve \ref{item:3}, while maintaining both \ref{item:1} and \ref{item:2} as established in prior work.

Comparing with related work of PoQ for $\QMA$, we note that all previous protocols only achieve soundness against computationally bounded provers~\cite{VZ21,CVZ20} (i.e.~they are arguments), or achieve computational zero-knowledge~\cite{CVZ20,BG22}.

Our approach builds upon the ZK protocol for the Hamiltonian problem introduced in~\cite{BMZ24}. While the ZK property follows directly from ~\cite{BMZ24}, neither~\cite{BMZ24} nor its precursor~\cite{Gri19} (which is not ZK) can be directly used to obtain a proof of knowledge due to three key technical challenges outlined below.

The first issue arises from the fact that for the protocols from~\cite{Gri19, BMZ24}, the optimal winning probability of the given Hamiltonian games need not be close to $1$, even in the case of $\lambda_0(H)=0$. This is because in those works, the probability $p$ of playing the Energy Test versus the Pauli Braiding Test  only depends on the energy gap $\beta -\alpha$ of a given promise problem $(H, \alpha, \beta)$. This is undesirable as it would necessitate using a different knowledge error function for different problems, each of which would depend on the gap $\beta-\alpha$. In \Cref{lem:gamma} we introduce a technical modification to remedy this. The insight is that we can have more control if allow $p$ to depend on a target energy $\alpha$ as well as the average weight of the Hamiltonian terms, instead of the energy gap. This allows our protocol to gain meaningful information even when we are not in the context of a promise language.

Another more significant challenge arises from a gap in the analysis given in~\cite{Gri19,BMZ24}. In more detail, \cite{Gri19,BMZ24} construct a mapping from a local Hamiltonian $H$ to a nonlocal game $G_H$ such that: if $H$ has low ground energy, then $\omega^*(G_H) > c$; and if $H$ has high energy, then $\omega^*(G_H) < s$, for completeness and soundness parameters $c$ and $s$. However, the analysis in~\cite{Gri19,BMZ24} only bounds the winning probability of $G_H$ as a function of the ground energy of $H$, and not as a function of the energy of the specific state used by the provers during the game. While this suffices for deciding problems in $\QMA$, establishing a PoK property requires a more refined analysis. Notably, in both \cite{Gri19, BMZ24}, the optimal strategy for a given Hamiltonian $H$ need not be the ``honest'' one in which the provers share a ground state, and the precise relationship between the energy of the shared state and the acceptance probability was not explicitly analyzed. Our protocol and analysis are qualitatively stronger: we show that any strategy achieving success probability at least $1 - \epsilon$ must use a shared state with energy at most $\alpha + O(\mathrm{poly}(n))\epsilon$, where $\alpha$ denotes the ground energy of $H$, or another target energy level. 

Finally, achieving the desired result necessitates the construction of an \textit{explicit} and \textit{efficient} extractor for our protocol. In~\cite{VZ21}, an extractor for quantum money verification\footnote{As we previously mentioned, \cite{VZ21} also provide a cPoQ for \QMA{}, but the techniques used in such extractor are tailored for the protocol of \cite{Mah18} and do not use rigidity properties of non-local games.} is constructed from isometries arising from the rigidity argument, though an explicit circuit for this extractor is not provided. A similar argument can be applied to~\cite{Gri19} to obtain a cPoQ protocol, but this approach would not yield a ZK proof system.
In this work, we give the explicit circuit for our efficient extractor which makes use of the rigidity of the low-weight Pauli Braiding test used in~\cite{BMZ24}, leading to the following result.

\begin{theorem}[Informal statement of \Cref{prop:main}]
    Let $\mathcal{G}(H)$ be the Hamiltonian game  for a $XZ$ local Hamiltonian $H$. If provers $P^*$, with some strategy $\mathcal{S}$, win $\mathcal{G}(H)$ with probability $1 - \epsilon$, then the extractor with oracle access to $\mathcal{S}$ outputs a quantum state $\zeta$ such that $$\tr(H\zeta) \leq \lambda_0(H) + O(\poly(n))\,\epsilon.$$
\end{theorem}

We briefly outline the operation of the extractor (see~\Cref{fig:simple-extractor-circuit}). First, the extractor makes the first prover (Alice), who is supposed to have the ground state, teleport her state and return a teleportation key. Then, by black-box access, it also applies a ``swap" gadget to the other prover (Bob) to exchange their private states. We show that this swap gadget can be explicitly constructed by the rigidity properties of the non-local game.
Then, after the correction by the teleportation key, the extractor successfully reconstructs the target quantum state that Alice originally possessed. See \Cref{sec:construct-extractor} for further details.

\begin{figure}[h]
\centering
   \begin{quantikz}[wire types={q, q, q, q}]
        \lstick[]{$M_A$} &\gate[2]{teleport} \hphantom{wide}&&& \ctrl[vertical wire=q]{3} && \rstick{$a, b$} \\
        \lstick[2]{$\ket{\psi}_{AB}$} &&  \\
        & \gate[2]{swap} \hphantom{wide} &\\
        \lstick[]{$E$} &&&&  \gate{\sigma_X^a\sigma_Z^b} & \rstick{$\zeta$\\output state} \\
    \end{quantikz}
\caption{The simplified model of our knowledge extractor. The top wire represents the classical message register $M_A$ through which the first prover (Alice) communicates. The two middle wires are the provers' private registers, initially entangled. Then the extractor reconstructs the target state in its register on the last wire.}
\label{fig:simple-extractor-circuit}
\end{figure}

\subsection{Overview of techniques}

\subsubsection{Definitions}
\label{sec:overview_defns}

To define a PoK for quantum witnesses, we first need the definition of $\QMA$-relations, as the quantum analogue of $\NP$-relations, that quantifies the quality of the extractor's output state. Following \cite{BG22, CVZ20}, %
we fix some parameter $\gamma$ and define the relation to contain $(x, \ket{\psi})$ for all quantum states $\ket{\psi}$ that lead to acceptance probability at least $\gamma$. Therefore, fixing some quantum verifier $Q$ and $\gamma$, we define a quantum relation as follows
\[
    R_{Q,\gamma} = \{(x,\sigma) :  Q\text{ accepts } (x, \sigma) \text{ with probability at least } \gamma\}.
\]
Notice that with $R_{Q,\gamma}$, we implicitly define sets $\{\mathcal{S}_x\}_x$ such that $(x,\sigma) \in R_{Q,\gamma}$ if and only if $\sigma \in \mathcal{S}_x$.

Towards defining an extractor for our setting, we note that we face the challenge of defining an extractor for a verifier that interacts only classically with the provers---yet the extractor is required to output a quantum state, which entails that the extractor needs to interact with the provers in a quantum way.   
Hence, we adopt black-box access to a quantum prover~\cite{VZ21} to our multiple prover setting:
 the extractor is allowed to run the controlled version of the provers' observables. 
 We also allow to place any quantum state (including a coherent superposition of messages) in the public message register used to query the prover in the protocol. However, we do not allow the extractor to access the provers' private register (of unbounded size). 

\subsubsection{Proof system}

\begin{figure}
    \centering
    \begin{subfigure}[t]{0.45\textwidth}
        \centering
        \begin{tikzpicture}[>=stealth, thick]

      \node[draw, rectangle] (Pa) at (0,0) {$P_A[\phi]$};
      \node[draw, rectangle] (Pb) at (4,0) {$P_B$};
      \node[draw, rectangle] (V) at (2,-3) {$V$};

      \draw[transform canvas={xshift=-13.pt}, <-, shorten <=5pt] (Pa) -- (V) node[midway, below left, align=center] {$q_1$};
      \draw[transform canvas={xshift=-3.pt}, ->, shorten >=5pt, shorten <=3pt] (Pa) -- (V) node[midway, right, align=center] {$c_1$};
      \draw[transform canvas={xshift=3.pt}, ->, shorten <=5pt, shorten >=5pt] (Pb) -- (V) node[midway, left, align=center] {$c_2$};
      \draw[transform canvas={xshift=13.pt}, <-, shorten <=5pt] (Pb) -- (V) node[midway, below right, align=center] {$q_2$};
      \draw[double distance=3pt, nfold=3, shorten >=5pt, shorten <=5pt]  (Pa) -- (Pb) node[midway, above] {EPR};
    \end{tikzpicture}
        
        \caption{low-weight Pauli Braiding test}
    \end{subfigure}
    \hfill
    \begin{subfigure}[t]{0.45\textwidth}
        \centering
        \begin{tikzpicture}[>=stealth, thick]

      \node[draw, rectangle] (Pa) at (0,0) {$P_A[\phi]$};
      \node[draw, rectangle] (Pb) at (4,0) {$P_B$};
      \node[draw, rectangle] (V) at (2,-3) {$V$};

      \draw[transform canvas={xshift=-13.pt}, <-, shorten <=5pt] (Pa) -- (V) node[midway, below left, align=center] {``Teleport''};
      \draw[transform canvas={xshift=-3.pt}, ->, shorten >=5pt, shorten <=3pt] (Pa) -- (V) node[midway, right, align=center] {$(a, b)$};
      \draw[transform canvas={xshift=3.pt}, ->, shorten <=5pt, shorten >=5pt] (Pb) -- (V) node[midway, left, align=center] {$c_2$};
      \draw[transform canvas={xshift=13.pt}, <-, shorten <=5pt] (Pb) -- (V) node[midway, below right, align=center] {$q_2$};
      \draw[double distance=3pt, nfold=3, shorten >=5pt, shorten <=5pt]  (Pa) -- (Pb) node[midway, above] {EPR};
      
    \end{tikzpicture}
    \caption{Energy test}
    \end{subfigure}
    \caption{Hamiltonian game composed of the low-weight Pauli Braiding test and the Energy test. The Pauli Braiding test is employed to check if they share suitable many EPR pairs and perform the indicated Pauli measurements. On the other hand, during the Energy test, the verifier requests Alice to teleport a low-energy state $\phi$ to Bob and reply with the teleportation keys. Observing Bob's measurement result, the verifier evaluates the ground energy of the local Hamiltonian.}
    \label{fig:Hamiltonian-game}
\end{figure}

Our starting point is the local Hamiltonian game first introduced in \cite{Gri19}. In this game, a single classical verifier interacts with a pair of untrusted quantum provers to solve the Local Hamiltonian problem. Subsequent work \cite{BMZ24} introduced a low-weight version of the Hamiltonian game that satisfies the zero-knowledge property. In both versions, the high-level structure of the game involves provers Alice and Bob sharing a maximally entangled state. During the protocol, Alice is instructed to teleport a low-energy state of a local Hamiltonian to Bob, who then returns a measurement result to the verifier. The security of the Hamiltonian game relies on the well-known property of self-testing \cite{NV18}, which allows a classical verifier to certify that the provers share a sufficient number of EPR pairs and that Bob performs the correct measurement during the protocol.

In slightly more detail, the Hamiltonian game is composed of two sub-games referred to as the low-weight Pauli Braiding test, and the Energy test (see \Cref{fig:Hamiltonian-game}). The Pauli braiding test is a well-known self-test which is used to check if the provers share suitably many EPR pairs and perform the indicated Pauli measurements \cite{NV18}. During the Energy test, the verifier commands Alice to teleport the ground state to Bob and reply with the classical teleportation keys. The verifier then uses Bob's measurement to estimate the ground energy of the local Hamiltonian.

While self-testing is a powerful tool for analyzing the security of these games, it does not directly imply classical proofs of quantum knowledge (cPoQ). One limitation is that the formalism of self-testing is well-suited for certifying that provers share a specific quantum state—such as $n$-EPR pairs—but does not easily extend to verifying that they share an arbitrary state from a broader family, such as a low-energy state of a Hamiltonian. Moreover, this state is supposed to be only on Alice's private register, and self-testing statements concern the entanglement shared by Alice and Bob. Finally, self-testing statements imply the existence of an isometry, and in order to have an extractor, we need an {\em explicit} and {\em efficient} circuit that outputs the desired state.

\subsubsection{Constructing an extractor}\label{sec:construct-extractor}
In order to construct the extractor, we make use of recent results on the rigidity of non-local games and their links to verification of quantum computations. Our approach is to use the isometry predicted by rigidity to reconstruct an accepting state. 

\begin{figure}
    \centering
    \begin{subfigure}[t]{0.45\textwidth}
        \centering
        \begin{tikzpicture}[>=stealth, thick]

          \node[draw, rectangle] (Pa) at (0,0) {$P_A[\cdot]$};
          \node[draw, rectangle] (Pb) at (4,0) {$P_B[\phi]$};
          \node[draw, rectangle] (E) at (2,-3) {$E$};
        
          \draw[transform canvas={xshift=-13.pt}, <-, shorten <=5pt] (Pa) -- (E) node[midway, below left, align=center] {1.Teleport};
          \draw[transform canvas={xshift=-3.pt}, ->, shorten >=5pt, shorten <=3pt] (Pa) -- (E) node[midway, right, align=center] {$(a,b)$};
          \draw[decorate, decoration={snake}, ->,]  (0.8, 0) -- (3.2, 0) node[midway, above] {teleported};
        \end{tikzpicture}
        \caption{Alice teleports the ground state to Bob}
    \end{subfigure}
    \hfill
    \begin{subfigure}[t]{0.45\textwidth}
        \centering
        \begin{tikzpicture}[>=stealth, thick]

          \node[draw, rectangle] (Pa) at (0,0) {$P_A$};
          \node[draw, rectangle] (Pb) at (4,0) {$P_B[\cdot]$};
          \node[draw, rectangle] (E) at (2,-3) {$E[\phi]$};
        
          \draw[transform canvas={xshift=-13.pt}, <-, shorten <=5pt, dotted] (Pa) -- (E) node[midway, below left, align=center] {};
          \draw[transform canvas={xshift=-3.pt}, ->, shorten >=5pt, shorten <=3pt, dotted] (Pa) -- (E) node[midway, right, align=center] {};
          \draw[decorate, decoration={snake}, ->, dotted]  (0.8, 0) -- (3.2, 0);
          \draw[shorten >=5pt, shorten <=5pt, ->, ultra thick] (Pb) -- (E) node[midway, below right, align=center] {2.Swap \\ (Black-box access)};
        \end{tikzpicture}
        \caption{Bob and the extractor swap their states}
    \end{subfigure}
    \caption{The process of extracting the ground state from Alice. First, the extractor makes Alice teleport the ground state to Bob as she does in the Energy test. Then the extractor swaps its internal state with Bob's, which is achieved by the extractor's black-box access. After the correction using teleportation keys, the extractor finally obtains the desired state.}
    \label{fig:extracting-process}
\end{figure}

More concretely, based on the assumption that they succeed at the Hamiltonian game with large probability, we construct an extractor as follows (see \Cref{fig:extracting-process}): First, the extractor sends Alice a query to teleport her witness $\phi$. Consequently, such witness is teleported into Bob's register. Next, the extractor exchanges its internal state with Bob's private register by black-box access to Bob's observables, which functions as a ``swap gadget'' and enables the extractor to retrieve the witness from Bob's hand. We note that both operations are performed locally on each prover’s register. Hence, these two steps commute and they can be carried out interchangeably (or simultaneously). Then one can also interpret the extractor in the way that it first exchanges its internal state with Bob's state to make itself entangled with Alice, and it next orders Alice to teleport her witness to receive it.

A key feature of the previous paragraph is to apply a ``swap gadget'' of the isometry. The existence of such an isometry is proven in the context of rigidity by the Gowers-Hatami theorem~\cite{GH17}, which is also used as a tool to show rigidity in the protocol~\cite{VZ21}. However, by close observation of each isometry discussed in \cite{VZ21, BMZ24}, it turns out that the isometry essentially emulates the swap gate. This is why we call the isometry a ``swap gadget''.

\subsection{Further related work and open problems. }
We now further discuss how our results compare to related results and state some open problems.

\paragraph{cPoQ vs.~verification.} As mentioned before, previous results~\cite{Gri19,BMZ24} have considered the task of verification of quantum computation. The problem they solve is roughly the following: given a Hamiltonian $H$, is there a state with energy below the threshold $\alpha$, or all states have energy above the threshold $\beta$? In their proof, they show that if the provers pass the test, then such a state must exist. We take an important step further and show that not only such a state exists, but the provers actually have such a state in hand. 

Moreover, we also notice that our statement is still useful when we know a priori that such a state exists, whereas the question of the existence of such a state trivializes in this case, and this could be important in some applications. For example, let us suppose that a company claims that they have the technology to create a known low-energy state of a Hamiltonian $H$. Using our results, one can certify that this claim is true. Remarkably, the cPoQ property  is a thought experiment, and in our case, it does not require additional resources beyond the conventional proof system: by theoretically demonstrating the existence of the extractor, we have upgraded the conclusions that can be drawn from a successful run of the initial protocol.

Finally, we notice that other protocols for verification of quantum computation with entangled servers~\cite{RUV13,GKW15,CGJV19} verify the computation step-by-step, instead of using the circuit-to-Hamiltonian construction. We leave it as an open problem if one can prove proof of quantum knowledge for this type of protocols as well.

\paragraph{Proofs of quantum knowledge in other models.}
Proofs of quantum knowledge have been considered in other settings such as \cite{BG22,CVZ20,VZ21}. In \cite{BG22}, to achieve proof of quantum knowledge, they have a protocol where a single prover exchanges messages with a verifier with quantum capabilities. In~\cite{CVZ20}, they remove the need of interaction, at a cost of requiring a trusted quantum setup shared between the prover and the verifier. Both of these results have the advantage of requiring a single server, and therefore, no need for the space-like separation and the challenge of keeping highly entangled states between two quantum devices. However, we think that achieving proof of quantum knowledge with classical devices is an important feature for fundamental reasons and practical ones. First, this provides a stronger classical ``leash'' on quantum devices. Secondly, in the two-prover scenario, quantum communication is only needed by previously established parties, and in a later moment anyone that wants to verify their resources can be fully classical.

This is also the motivation of \cite{VZ21}, where they provide a classical proof of quantum knowledge under cryptographic assumptions. In their result, they extend the  protocol of classical verification of quantum computation~\cite{Mah18b}, and show a classical proof of quantum knowledge against a {\em bounded }quantum prover. While again, their protocol has the advantage of holding in the single-server scenario, our cPoQ has some advantages when compared to theirs. First, it is information-theoretically secure, and theirs relies on the hardness of LWE. More importantly, their protocol may require much more powerful devices than the ones needed to create the target quantum state.
They also assume that the prover prepares a polynomial number of witness copies for the protocol, whereas our model requires only a single copy.
We thus conclude that our approach can lead to small-scale demonstrations even in the short term, thus boosting the conclusions of quantum experiments such as~\cite{DNM+24}.
Furthermore, \cite{VZ21} suggests but does not formally study the ZK property, whereas we provide a full analysis also for ZK.  

\paragraph{State complexity.} We notice that recently quantum complexity classes have been extended to ``inherently quantum problems''~\cite{RY22,MY23,BEM+23arxiv}. In particular, new classes such as \class{stateBQP}, \class{stateQMA}, etc., have been proposed, where the goal is to synthesize a quantum state, and not solve a decision problem or find a classical solution as in  standard complexity classes. We leave as an open direction to study cPoQs as verifiable delegation of state synthesis problem.

\paragraph{Extension of self-testing technique.} As we previously discussed, our definition of cPoQ does not follow the traditional definition of self-testing. On one hand, we allow more power in the extraction which is not required to use local isometries. On the other hand the extractor is required to be efficient, a restriction not required in the formal definition of self-testing.  We leave it as an open problem if one can use this alternative model of certification in other scenarios.

\paragraph{Proofs of destruction.} 
In quantum cryptography, there is a recent line of works that consider proofs of destruction of states that were held by a second party~\cite{BI20}.
We notice that our extractor gets their hands on a quantum state that was originally held by one of the provers. We leave as an open question if our definition can be used to prove that, under some assumptions, one of the provers ``lost'' their knowledge of the witness.

\paragraph{Compiling non-local games.}  We leave as an open question whether we could also obtain cPoQ in the complied interactive single-prover protocol following the lines of~\cite{KLVY23}.

\subsection{Acknowledgements}
We thank Yuming Zhao for helpful discussions.
The authors acknowledge support of the Institut Henri Poincaré (UAR 839 CNRS-Sorbonne Université), and LabEx CARMIN (ANR-10-LABX-59-01).
We acknowledge the support of the Natural Sciences and Engineering Research Council of Canada (NSERC)(ALLRP-578455-2022). This research was undertaken, in part, thanks to funding from the Canada Research Chairs Program. ABG is funded by ANR JCJC TCS-NISQ ANR-
22-CE47-0004 and by the PEPR integrated project EPiQ ANR-22-PETQ-0007 part of Plan France
2030. AM is supported by NSERC DG 2024-06049.

\section{Preliminaries}\label{sec:prelims}
\subsection{Notation}

Throughout this paper, we regard any $n$-bit strings as an element of $\mathbb{Z}_2^{\otimes n}$. 
For any $n$-bit strings $a, b$ the notation $a + b$ denotes the bit-wise XOR of $a$ and $b$. We denote by $|a|$ the Hamming \mbox{weight of~$a$.} 

In this work, all Hilbert spaces $\mathcal{H}$ are taken to be finite dimensional and we use $\| \cdot \|$ to denote the usual $\ell^2$ norm on $\mathcal{H}$.

\subsection{Quantum information}

We follow the standard quantum formalism of states and measurements. An
\emph{observable} is a Hermitian operator whose eigenvalues are $\pm
1$, and encodes a two-outcome projective measurement (the POVM elements of the two outcomes
are the projections on to the $+1$ and $-1$ eigenspaces).
In particular, we use the following Pauli matrices and the identity matrix denoted by
\[
    \sigma_I \coloneqq \begin{pmatrix} 1 & 0 \\ 0 & 1 \end{pmatrix},
    \sigma_X \coloneqq \begin{pmatrix} 0 & 1 \\ 1 & 0 \end{pmatrix}, 
    \sigma_Z \coloneqq \begin{pmatrix} 1 & 0 \\ 0 & -1 \end{pmatrix}.
\]
They satisfy the anti-commutation relation
\[ 
    \sigma_X\sigma_Z = -\sigma_Z\sigma_X.
\]
Let $\ket{\Phi^+}$ to denote the EPR pair, that is, 
\[
    \ket{\Phi^+} = \frac{1}{\sqrt{2}}(\ket{00} + \ket{11})
\]and $\ket{\Phi^+_n}$ to denote the $n$-tensor product of the EPR pair.

Consider two quantum systems $A$ and $B$ whose Hilbert spaces are given $H_A$ and $H_B$ respectively.
We denote by $\tr_A$ the partial trace over $A$. We also define a function $\tr_{\overline{A}}$ to obtain a reduced density operator on $A$. For example, let $\rho_{AB}$ be a density operator defined on $H_A \otimes H_B$. Then $\tr_{\overline{A}}(\rho_{AB})$ is a reduced density operator on $H_A$.

Finally, we recall the notion of efficient uniform family of quantum circuits.

\begin{definition}\label{def:polytime-extractor}
    A quantum polynomial-time machine $\mathcal{C}$ is a uniformly generated family of quantum circuits $\mathcal{C} = \{\mathcal{C}_n\}_{n}$, where, for some polynomials $p, q, r$, $\mathcal{C}_n$ takes as input a string $z \in \Sigma^*$ with $|z| = n$, a $p(n)$-qubit quantum state $\ket{\phi}$, and $q(n)$ auxiliary qubits in state $\ket{0}^{\otimes q(n)}$, consists of $r(n)$ gates.
In particular, we call it an efficient machine.
\end{definition}

\subsection{Non-local games}\label{sec:nonlocalgame}
A two-player (called Alice and Bob) one-round non-local game $\mathcal{G}$ is a tuple $\big(\lambda,\mu,\mathcal{I}_A,\mathcal{I}_B,\mathcal{O}_A,\mathcal{O}_B\big)$ , where $\mathcal{I}_A,\mathcal{I}_B$ are finite input sets, and $\mathcal{O}_A,\mathcal{O}_B$ are finite output sets, $\mu$ is a probability distribution on $\mathcal{I}_A\times\mathcal{I}_B$, and $\lambda:\mathcal{O}_A\times\mathcal{O}_B\times\mathcal{I}_A\times\mathcal{I}_B\rightarrow\{0,1\}$ determines the win/lose conditions. A quantum strategy $\mathcal{S}$ for~$\mathcal{G}$
is given by finite-dimensional Hilbert spaces $\mathcal{H}_A$ and $\mathcal{H}_B$, a unit vector $\ket{\psi}\in\mathcal{H}_A\otimes \mathcal{H}_B$, Alice's POVMs $\{E_a^x:a\in\mathcal{O}_A \}, x\in\mathcal{I}_A$ on $\mathcal{H}_A$, and Bob's POVMs $\{F_b^y:b\in\mathcal{O}_B\},y\in\mathcal{I}_B$ on $\mathcal{H}_B$. The winning probability of $\mathcal{S}$ for game $\mathcal{G}$ is given by
\begin{align*}
\omega(\mathcal{G}, \mathcal{S}):=\sum\limits_{a,b,x,y}\mu(x,y)\lambda(a,b\vert x,y)\bra{\psi}E_a^x\otimes F_b^y\ket{\psi}.
\end{align*}
A quantum strategy $\mathcal{S}$ for a non-local game $\mathcal{G}$ is said to be \emph{perfect} if $\omega(\mathcal{G}, \mathcal{S})=1$. When the game is clear from the context we simply write $\omega(\mathcal{S})$ to refer to the winning probability of a strategy $\mathcal{S}$. The \emph{quantum value} of a non-local game $\mathcal{G}$ is defined as
 \begin{align*}
     \omega^*(\mathcal{G}):=\sup\{\omega(\mathcal{S}):\mathcal{S} \mbox{  a quantum strategy for } G\}.
 \end{align*}

In this paper, we assume all measurements employed in a quantum strategy are PVMs. An $m$-outcome PVM $\{P_1,\cdots,P_m\}$ corresponds to an observable $\sum_{j\in[m]}\exp(\frac{2\pi i}{m}j)P_j$, so a quantum strategy for a game $\mathcal{G}=\big(\lambda,\mu,\mathcal{I}_A,\mathcal{I}_B,\mathcal{O}_A,\mathcal{O}_B\big)$ can also be specified by a triple
\begin{align*}
    \mathcal{S}=(\tau^A,\tau^B,\ket{\psi}\in\mathcal{H}_A\otimes\mathcal{H}_B)
\end{align*}
where $\tau^A(x)$, $x\in \mathcal{I}_A$ are $\mathcal{O}_A$-outcome observables on $\mathcal{H}_A$, and $\tau^B(y)$, $y\in \mathcal{I}_B$ are $\mathcal{O}_B$-outcome observables on $\mathcal{H}_B$.

Here we introduce the well-known Mermin-Peres Magic Square game~\cite{Mer90,Per90}, in which Alice and Bob are trying to convince the verifier that they have a solution to a system of equations over $\mathbb{Z}_2$. There are 9 variables $v_1, \dots, v_9$ in a $3\times 3$-array whose rows are labeled $r_1,r_2,r_3$ and columns are labeled $c_1, c_2, c_3$.
\begin{table}[ht]
    \centering
    \begin{tabular}{c|c|c|c|}
    \nolines{} & \nolines{$c_1$} & \nolines{$c_2$} & \nolines{$c_3$}  \\ \cline{2-4}
    $r_1$ & $v_1$ & $v_2$ & $v_3$  \\ \cline{2-4}
    $r_2$ & $v_4$ & $v_5$ & $v_6$ \\ \cline{2-4}
    $r_3$ & $v_7$ & $v_8$ & $v_9$ \\ \cline{2-4}
    \end{tabular}
    \caption{Magic square game}
\end{table}

Each row or column corresponds to an equation: variables along the rows or columns in $\{r_1,r_2,r_3,c_1,c_2\}$ sum to 0; variables along the column $c_3$ sum to 1. In each round, Alice receives one of the 6 possible equations and  must respond with a satisfying assignment to the given equation. Bob is then asked to provide a consistent assignment to one of the variables contained in the equation Alice received. The following table describes an operator solution for this system of equations:
\begin{table}[h]
    \centering
    \begin{tabular}{lll}
        $A_1= \sigma_I \otimes \sigma_{Z}$ &  $A_2 = \sigma_{Z} \otimes \sigma_I$ & $A_3 = \sigma_{Z} \otimes \sigma_{Z}$ \\
         $A_4 = \sigma_{X} \otimes \sigma_I$ & $A_5 = \sigma_I \otimes \sigma_{X}$ & $A_6 = \sigma_{X} \otimes \sigma_{X}$ \\
         $A_7 = \sigma_{X} \otimes \sigma_{Z}$ & $A_8 = \sigma_{Z} \otimes \sigma_{X}$ & $A_9 = \sigma_X\sigma_Z \otimes \sigma_Z\sigma_X$
    \end{tabular}
\caption{Operator solution for Magic Square game}\label{tbl:optStrategyforMSgame}
\end{table}

This game can be won with certainty by the following strategy $\mathcal{S^*}$:
\begin{itemize}
    \item the players share two EPR pairs,
    \item given a variable $v_i$, Bob performs $A_i$ on his registers, and
    \item given a row or column consisting of three variables $v_j,v_k$ and $v_\ell$, Alice perform $A_jA_kA_\ell$ on her registers.
\end{itemize}

\begin{definition}
Let $\mathcal{S}=(\tau^A,\tau^B,\ket{\psi}\in\mathcal{H}_A\otimes\mathcal{H}_B)$ and $\widetilde{\mathcal{S}}=( \{\widetilde{\tau}^A \},\{\widetilde{\tau}^B \},\ket{\widetilde{\psi}} \in \widetilde{\mathcal{H}}_A \otimes \widetilde{\mathcal{H}}_B )$ be two quantum strategies for a game $\mathcal{G}=\big(\lambda,\mu,\mathcal{I}_A,\mathcal{I}_B,\mathcal{O}_A,\mathcal{O}_B\big)$. We say  $\mathcal{S}$ is $\delta$-close to  $\widetilde{\mathcal{S}}$ 
if there are Hilbert spaces $\mathcal{H}_A^{aux}$ and $\mathcal{H}_B^{aux}$,  isometries $V_A:\mathcal{H}_A\rightarrow \widetilde{\mathcal{H}}_A\otimes\mathcal{H}_A^{aux}$ and $V_B:\mathcal{H}_B\rightarrow\widetilde{\mathcal{H}}_B\otimes\mathcal{H}_B^{aux}$, and a unit vector $\ket{aux}\in\mathcal{H}_A^{aux}\otimes\mathcal{H}_B^{aux}$ such that
\begin{align}
    \norm{ (V_A\otimes V_B)(\tau^A(x)\otimes \tau^B(y)\ket{\psi}) - (\widetilde{\tau}^A(x)\otimes\widetilde{\tau}^B(y)\ket{\widetilde{\psi}})\otimes\ket{aux} }^2\leq\delta
\end{align}
for all $(x,y)\in\mathcal{I}_A\times\mathcal{I}_B$.
\end{definition}

The rigidity of the Magic Square game has been well studied \cite{WBMS16}:
\begin{lemma}\label{lemma:MS}
If $\mathcal{S}$ is a strategy for the Magic Square game with winning probability $1-\epsilon$, then $\mathcal{S}$ is $O(\sqrt{\epsilon})$-close to $\mathcal{S^*}$.
\end{lemma}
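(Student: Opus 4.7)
The plan is to derive approximate algebraic relations on Bob's nine observables directly from the winning condition, show that these relations force an approximate two-qubit Pauli structure, and then invoke a standard self-testing isometry construction; this is the established template for Magic Square rigidity~\cite{WBMS16}, and what follows is a sketch of how I would carry it out.

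First, I would convert the winning probability $1-\epsilon$ into state-dependent consistency relations. Let $B_i$ denote Bob's $\pm 1$ observable on question $v_i$, and for each row/column question $q$ containing variables $v_j,v_k,v_\ell$, let $\tilde A_{q,i}$ denote the $\pm 1$ observable extracted from Alice's joint measurement that records her bit on variable $v_i$. Summing the loss probability over rounds gives, on the one hand, the cross-consistency $\bra{\psi}\tilde A_{q,i}\otimes B_i\ket{\psi}\ge 1-O(\epsilon)$ whenever $v_i$ appears in $q$, and on the other hand the approximate parity relation $\tilde A_{q,j}\tilde A_{q,k}\tilde A_{q,\ell}\ket{\psi}\approx\pm\ket{\psi}$ on Alice's side (with sign $-1$ only for $q=c_3$), together with the exact mutual commutation of the $\tilde A_{q,\cdot}$'s, since they come from a single joint measurement.

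Next, I would push these structural facts across the bipartition. Cross-consistency implies that each $B_i$ approximately acts like $\tilde A_{q,i}$ when paired with $\ket{\psi}$, so Alice's commutation and parity relations transfer to approximate relations $B_j B_k B_\ell\ket{\psi}\approx\pm\ket{\psi}$ and $[B_j,B_k]\ket{\psi}\approx 0$ for every pair $v_j,v_k$ sharing a row or column. The classical Magic Square ``contradiction''---evaluating $B_1 B_2 B_4 B_5\ket{\psi}$ via the row decompositions $B_1 B_2\approx B_3$, $B_4 B_5\approx B_6$, $B_3 B_6\approx -B_9$ and via the column decompositions $B_1 B_4\approx B_7$, $B_2 B_5\approx B_8$, $B_7 B_8\approx B_9$---then forces $B_2 B_4\ket{\psi}\approx -B_4 B_2\ket{\psi}$ and symmetrically $B_1 B_5\ket{\psi}\approx -B_5 B_1\ket{\psi}$, while $(B_1,B_2)$ and $(B_4,B_5)$ approximately commute on $\ket{\psi}$. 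Each step is a single Cauchy-Schwarz/triangle-inequality bound costing $O(\sqrt{\epsilon})$ in state norm.

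With two pairs of approximately anticommuting $\pm 1$ observables whose pairs mutually commute, I would invoke the Gowers-Hatami theorem~\cite{GH17} in the Pauli-group form used for self-testing in~\cite{NV18} to produce an isometry $V_B:\mathcal{H}_B\to(\complex^2)^{\otimes 2}\otimes\mathcal{H}_B^{aux}$ sending $B_4,B_2,B_5,B_1$ to $\sigma_X\otimes\sigma_I,\sigma_Z\otimes\sigma_I,\sigma_I\otimes\sigma_X,\sigma_I\otimes\sigma_Z$ up to error $O(\sqrt{\epsilon})$; the remaining $B_i$ are then fixed by the derived product relations, and an analogous construction on Alice's side yields $V_A$. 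The cross-consistency relations force $(V_A\otimes V_B)\ket{\psi}$ to be $O(\sqrt{\epsilon})$-close to $\ket{\Phi^+_2}\otimes\ket{aux}$, since one-sided Pauli statistics on $\ket{\Phi^+_2}$ characterize it. The main obstacle I anticipate is the careful error bookkeeping: every identity above is only valid when applied to the shared state, so a constant-size sequence of rearrangements must be tracked so the accumulated error stays $O(\sqrt{\epsilon})$; since this theorem is already proved in~\cite{WBMS16}, the cleanest option for the paper is to cite that work directly and treat the explicit bound as a black box.
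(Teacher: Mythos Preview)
The paper does not actually prove this lemma; it is stated in the preliminaries as a known result and attributed directly to~\cite{WBMS16} without any argument. Your final sentence---that the cleanest option is to cite~\cite{WBMS16} and treat the bound as a black box---is precisely what the paper does, so in that sense your proposal agrees with the paper exactly.

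The preceding sketch you give is a reasonable outline of the standard Magic Square rigidity argument (consistency relations, transferring commutation and parity across the bipartition, extracting approximate anticommutation, then applying a swap-type isometry or Gowers--Hatami), and nothing in it is wrong at the level of a sketch. But since the paper treats the lemma purely as an imported fact, the sketch is superfluous for the purposes of matching the paper's proof.
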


\subsection{Local Hamiltonians}
We define the Local Hamiltonian problem known as the quantum analog of MAX-SAT problem. In a nutshell, the Local Hamiltonian problem asks if there is a global state such that its energy in respect of $H = (1/m)\sum_{i \in [m]} H_i$ is at most $\alpha$ or all states have energy at least $\beta$ for some constant $\alpha, \beta \in \mathbb{R}$ with $\alpha < \beta$. This problem was first proved to be QMA-complete for $k = 5$ and $\beta - \alpha \geq 1/\poly(n)$ \cite{ksv02}. 
In this paper, we particularly consider $XZ$ local Hamiltonians~\cite{CM14} where all the terms are tensor products of $\sigma_X, \sigma_Z$ and $\sigma_I$.
\begin{definition}[XZ Local Hamiltonian]\label{def:xz-local-hamiltonian}
    The XZ $k$-Local Hamiltonian problem, for $k \in \mathbb{N}$ and parameters $\alpha, \beta \in [0,1]$
    with $\alpha<\beta$, is the following promise problem. Let $n$ be the number of qubits of a quantum system.
    The input is a sequence of $m(n)$
    values $\gamma_1,...,\gamma_{m(n)}\in [-1,1]$ and $m(n)$
    Hamiltonians $H_1, \ldots, H_{m(n)}$
    where $m$ is a polynomial in $n$, and for each $i\in [m(n)]$,
    $H_i$ is of the form $\bigotimes_{j \in n} \sigma_{W_j} \in  \{\sigma_X, \sigma_Z, \sigma_I\}^{\otimes n}$ with
    $|\{j | j \in [n] \textrm{ and } \sigma_{W_j} \neq \sigma_I \}| \leq k$.
    For $H = \frac{1}{m(n)} \sum_{j = 1}^{m(n)} \gamma_j H_j$, one of the following two conditions hold.
    \begin{description}
        \item[\quad Yes.]
        {There exists a
          state $\ket{\psi} \in \complex^{2^{n}}$ such that
          $\bra{\psi} H \ket{\psi}
            \leq \alpha(n)$}
        \item[\quad No.]
        {For all states $\ket{\psi} \in \complex^{2^{n}}$
          it holds that
          $\bra{\psi} H \ket{\psi}
            \geq \beta(n) .$
            }
    \end{description} 
\end{definition}

\subsection{Entangled proof systems and zero-knowledge}

Here give the definition for a one-round two-prover entangled proof system to be statistical zero-knowledge.

\begin{definition}
    A one-round two-prover entangled proof system for a problem $A=(A_{yes}, A_{no})$, with completeness $c$ and soundness $s$, is a polynomial-time computable function that takes an instance $x \in A$ to a description of a non-local game $\mathcal{G}_x$ 
satisfies the following conditions. 
\ \begin{description}
        \item[\quad Completeness.]
        {For every
          $x \in A_{yes}$ we have
          $\omega^*(\mathcal{G}_x) \geq c$}
        \item[\quad Soundness.]
        {For every
          $x \in A_{no}$ we have
          $\omega^*(\mathcal{G}_x) \leq s$.}
    \end{description} 
\end{definition}

Zero-knowledge is defined with respect to a prescribed honest strategy. Specifically, for the honest quantum strategy $\mathcal{S}_h$ of Alice and Bob and malicious verifier $\widehat{V}$, we take $View(\widehat{V}(x), \mathcal{S}_h)$ to be the random variable corresponding to the transcript $(x,\theta, q_1,r_1,q_2,r_1)$ where $x$ and $\theta$ are the input and randomness of $\widehat{V}$, and $q_1, q_2$ and $r_1, r_2$ are questions ans answers of Alice and Bob respectively. A protocol is zero-knowledge when for all ``yes" instances, a simulator can sample from the distribution above. 

\begin{definition}\label{defn:SZKMIP*}
    A one-round two-prover entangled proof system for a problem $A=(A_{yes}, A_{no})$ is \emph{statistical zero-knowledge} if for every $x \in A_{yes}$ there exists an honest prover strategy $\mathcal{S}_h$ satisfying the following.

    \begin{description}
        \item[\quad Completeness.]
        {Provers using $\mathcal{S}_h$ win the game with probability at least $c$, i.e
          $\omega^*(\mathcal{G}_x, \mathcal{S}_h) \geq c$.}
        \item[\quad Simulability.]
        {For any PPT (probabilistic polynomial-time) malicious verifier $\widehat{V}$ there exists a PPT simulator $Sim_{\widehat{V}}$ with output distribution that is $\epsilon$-close to $View(\widehat{V}(x),\mathcal{S}_h)$ in statistical distance for some negligible function $\epsilon(|x|)$.}
    \end{description}

\end{definition}

\subsection{Classical proofs of knowledge and $\mathsf{QMA}$ relations}

We refer the reader to \cref{sec:intro} for a high-level introduction to Proofs of Knowledge (Pok).  Here, we provide further background details and definitions.   

A PoK is an interactive proof system for some relation $R$ such that if the verifier accepts some input $x$ with high enough probability, then she is convinced that the prover ``knows'' some witness $w$ such that $(x, w) \in R$. This notion is formalized by requiring the existence of an efficient extractor~$K$ that is able to return a witness for $x$ when $K$ is given oracle access to the prover. 
Here, oracle access means that the machine can operate the prover's action, and rewind the prover, i.e., after some interaction with the prover it can revert the prover back to a previous state and retry another interaction observing the response.

\begin{definition}[classical PoK \cite{BG93}]
    Let $R \subseteq \mathcal{X} \times \mathcal{Y}$ be a relation. A proof system $(P, V )$ for $R$ is a PoK for $R$ with knowledge error $\kappa$ if there exists a polynomial $p > 0$ and a polynomial-time machine $K$ , called the knowledge extractor,
    such that for any classical interactive machine $P^*$ that makes $V$ accept some instance $x$ of size $n$ with probability at least $\varepsilon > \kappa (n)$, we have \[
    Pr \Bigl[ \Bigl( x, K^{P^*} (x,y)(x) \Bigr) \in R \Bigr] \geq p \biggl( (\varepsilon  - \kappa (n)), 1^n \biggr).
\]
In the definition, $y$ corresponds to the side-information that $P^*$ has, possibly including some $w$ such that $(x, w) \in R$.
\end{definition}

Then this notion was extended in the post-quantum setting by Unruh \cite{Unr12}, followed by \cite{CVZ20} and \cite{BG22} in the fully quantum setting.
However, we see two major challenges in the quantum setting: First, it is not straightforward to define the quantum relation between the input $x$ and some quantum state $\ket{\psi}$. Second, how the extractor interacts with the prover must be clarified. In particular, unlike in a classical setting, one cannot utilize the rewinding technique (taking a snapshot of the state and rewinding to the previous state) due to the no-cloning theorem and the destructive nature of quantum measurements.

To overcome the first challenge, we introduce a quantum relation \cite{BG22}. We fix some parameter $\gamma$ and define the relation to contain $(x, \ket{\psi})$ for all quantum states $\ket{\psi}$ that lead to acceptance probability at least $\gamma$. Therefore, fixing some quantum verifier $Q$ and $\gamma$, we define a quantum relation as follows
\[
    R_{Q,\gamma} = \{(x,\sigma) :  Q\text{ accepts } (x, \sigma) \text{ with probability at least } \gamma\}.
\]
Notice that with $R_{Q,\gamma}$, we implicitly define sets of states $\{\mathcal{S}_x\}_x$ such that $(x,\sigma) \in R_{Q,\gamma}$ if and only if $\sigma \in \mathcal{S}_x$. With this in hand, we can define a $\QMA$-relation.

\begin{definition}[$\QMA$-relation] 
    Let $A = (A_{yes}, A_{no})$ be a problem in $\QMA$, and let $Q$ be an associated quantum polynomial-time verification algorithm (which takes as input an instance and a witness), with completeness $\alpha$ and soundness $\beta$. Then, we say that ($R_{Q,\gamma}, \alpha, \beta)$  is a $\QMA$-relation with completeness $\alpha$ and soundness $\beta$ for the problem $A$ if for all $x \in A_{yes}$, there exists some $| \psi \rangle$ such that $(x, | \psi \rangle ) \in R_{Q,\alpha}$ and for all $x \in A_{no}$, for every $\rho$ it holds that $(x, \rho ) \not \in R_{Q,\beta}$.
\end{definition}

\section{Proof of quantum knowledge for multi-prover interactive proof systems}
\label{sec:cpok-multiple-provers}
In this section, we formally define oracle access to the multiple provers with some strategy, as well as classical proofs of quantum knowledge for multiple-prover interactive proof systems. We refer the reader to \cref{sec:intro} for a high-level introduction.

As mentioned earlier, we extend the definition of an extractor made in \cite{VZ21}. The extractor has black box access to the provers,
being able to coherently manipulate the prover's unitary operation controlled on a messages from the verifier. For example, the extractor is able to put a superposition of messages from the verifier in the message register~\cite{Wat09b,Unr12}, and the provers coherently apply their operations on their quantum state.

In order to formally define cPoQs, we first model a machine as a quantum circuit that has a oracle access to multiple provers. 
\begin{definition}[Oracle access to a strategy]\label{def:OracleAccess}
Let $\mathcal{S}$ denote a strategy for a one-round two-prover entangled proof system $\mathcal{G}$. 
We denote by $E^\mathcal{S}$ a quantum circuit $E$ that has a black-box access to the provers consisting of its private register as well as prover registers initialized in state $\ket{\psi}$.
The circuit consists of arbitrary gates from a universal family applied on $\mathcal{H}_E$, together with controlled applications of prover observables on each provers' registers.    
\end{definition}

We notice that by {\em applying} an observable, we mean that the extractor can either apply the operator corresponding to the observable, or measure the quantum state and report the outcome.

Finally, we define a classical Proof of Quantum Knowledge for multiple-prover interactive proofs.

\begin{definition}[Classical Proof of Quantum Knowledge for multiple-prover interactive proofs]
  \label{def:cpoq-multi}
  Let $R_{Q,q(\epsilon, n)}$ be a $\QMA$-relation.
  A one-round two-prover entangled proof system $\mathcal{G}$ is a classical Proof of Quantum Knowledge for $R_{Q,q(\epsilon, n)}$ with knowledge error $\kappa(\cdot) > 0$ and quality function $q$,
  if there exists a polynomial-time machine $E$ which given oracle access to  any strategy $\mathcal{S}$ satisfying  $\omega^*(\mathcal{G},\mathcal{S}) \geq  \epsilon > \kappa(x, 1^n)$, outputs a quantum state $\phi$ in its private register such that $\left(x, \phi \right) \in R_{Q,q(\epsilon,n)}$.
\end{definition}

\section{Non-local game for local Hamiltonians}\label{sec:nonlocalgame-localhamiltonians}
In~\cite{Gri19}, Grilo introduces a non-local game for deciding the XZ-Hamiltonian problem, as given in \cref{def:xz-local-hamiltonian}. Honest provers for this game share suitably many EPR pairs, and one prover (Alice) holds a ground state for Hamiltonian~$H$. 
The game combines two subtests: an energy test, and a second test used to detect deviations from honest behavior. In the energy test, the verifier asks Alice to teleport the ground state of a local Hamiltonian $H$ to Bob through the shared EPR pairs. Then Bob performs a measurement corresponding to a randomly chosen Hamiltonian term. For the second test, ~\cite{Gri19} uses the well-known Pauli Braiding test~\cite{NV17} to certify that the players share suitably many EPR pairs, and perform the correct measurements. 

Subsequently, ~\cite{BMZ24} presented a modified version of this game in which the Pauli Braiding test is replaced by the low-weight Pauli Braiding test. This modification was motivated by improving the game of ~\cite{Gri19} to a zero-knowledge protocol. We chose to work with game introduced in ~\cite{BMZ24} to allow for the construction of a zero-knowledge classical proof of quantum knowledge.

\subsection{Low-weight Pauli braiding test}

First we recall a few details about the LWPBT~\cite{BMZ24}, which is constructed from the low-weight linearity test and the low-weight anti-commutation test (see \Cref{fig:pbt}). In brief, during this game Bob receives questions from the set $\mathcal{I}_B:=\{W(a):W\in\{X,Z\}^n,a\in\{0,1\}^n \text{ such that }\abs{a}\leq 6\}$, whereas Alice receives similarly formed questions in pairs. In the honest strategy, the provers share $n$ EPR pairs and measure them according to the appropriate Pauli observable determined by the questions $W(a)$.

\begin{figure}[t]
\rule[1ex]{\textwidth}{0.5pt}
The verifier performs the following steps, with probability $\frac{1}{2}$ each:
  \begin{enumerate}[label=(\Alph*)]
  \item Linearity test
    \begin{enumerate}[label=\arabic*.]
      \item  The verifier selects uniformly at random $W \in \lbrace X, Z \rbrace^{n}$  and strings $a, a' \in \lbrace 0,1 \rbrace^{n}$ satisfying $|a|, |a'| \leq 6$ (\emph{i.e.} $a,a'$ both have at most $6$ non-zero entries).
      \item The verifier sends $(W(a), W(a'))$ to Alice. If $a+a'$ has weight at most $6$ then the verifier selects $W' \in \lbrace W(a), W(a'), W(a+a') \rbrace$ uniformly at random to send to Bob. Otherwise, the verifier uniformly at random sends $W' \in \lbrace W(a), W(a') \rbrace$ to Bob.
      \item The verifier receives two bits $(b_1,b_2)$ from Alice and one bit $c$ from Bob.
      \item If Bob receives $W(a) $ then the verifier requires $b_1=c$. If Bob receives $W(a')$ then the verifier requires $b_2=c$. If Bob receives $W(a+a')$ then the verifier requires $b_1 +b_2 =c$.
    \end{enumerate}
  \item Anti-commutation test
    \begin{enumerate}[label=\arabic*.]
      \item The verifier samples uniformly at random a string $a \in \{0, 1\}^n$ with exactly two non-zero entries $i < j$. The verifier also samples a row or column $q \in \{r_1, r_2, r_3, c_1, c_2, c_3\}$, and a variable $v_k$ contained in $q$ as in the Magic Square game.
      \item Alice receives the question $(q, a)$.
      \item If $k \neq 9$ then Bob receives $W (a) = I^{i-1}W^{i}I^{j-i}W^{j}I^{n-j} \in \mathcal{I}_A$ with $\sigma_{W_i} \otimes \sigma_{W_j} = A_k$. If $k = 9$ then Bob receives question $(v_9, a)$.
      \item The players win if and only if Alice responds with a satisfying assignment to $q$ and Bob provides an assignment to variable $v_k$ that is consistent with Alice’s.
    \end{enumerate}
\end{enumerate}
\rule[2ex]{\textwidth}{0.5pt}\vspace{-.5cm}
\caption{Low-Weight Pauli Braiding Test}\label{fig:pbt}
\end{figure}

While the honest strategy wins with probability $1$, in general, the provers could deviate and perform an arbitrary strategy $\mathcal{S}=(\tau^A, \tau^B, \ket{\psi}_{AB} \in \mathcal{H}_A \otimes \mathcal{H}_B)$, sharing some entangled state $\ket{\psi}_{AB} \in \mathcal{H}_A \otimes \mathcal{H}_B$. We will write $W_P(a) = \tau^P(W(a)), P \in \{A, B\}$ to denote prover's observable corresponding to questions $W(a) \in \mathcal{I}_B$. The self-testing properties of this game show that if a strategy $\mathcal{S}$ is playing suitably close to $1$ then the state and measurements must be, up to local isometry, close to the honest strategy. 

\begin{theorem}[Theorem 18 of \cite{BMZ24}]\label{thm:rigidity}
There exists a constant $C>0$ such that the following holds. For any $\epsilon>0$, $n\in\mathbb{N}$, and strategy $\mathcal{S}=(\tau^A,\tau^B,\ket{\psi}_{AB}\in\mathcal{H}_A\otimes\mathcal{H}_B)$ for the $n$-qubit LWPBT with winning probability $1-\epsilon$, there are isometries $V_A:\mathcal{H}_A\rightarrow (\mathbb{C}^2)^{\otimes n}\otimes\mathcal{H}_A^{aux},V_B:\mathcal{H}_B\rightarrow(\mathbb{C}^2)^{\otimes n}\otimes\mathcal{H}_B^{aux}$ and a unit vector $\ket{aux}\in\mathcal{H}_A^{aux}\otimes\mathcal{H}_B^{aux}$ such that
\begin{equation*}
   \norm{ (V_A\otimes V_B)\big(Id_{\mathcal{H}_A} \otimes W_B(a)\ket{\psi}_{AB}\big)-\big(Id_{\mathbb{C}^{2^n}} \otimes \sigma_W(a)\ket{\Phi^+}^{\otimes n}\big)\otimes\ket{aux} } \leq Cn^{6}\epsilon^{1/4}
\end{equation*}
for all $W(a)\in\mathcal{I}_B$.

In particular, we have
\begin{equation*}
   \norm{W_B(a)\ket{\psi}_{AB}-V_B^\dagger \sigma_W(a)V_B\ket{\psi}_{AB}} \leq Cn^{6}\epsilon^{1/4}
\end{equation*}
for all $W(a)\in\mathcal{I}_B$.
\end{theorem}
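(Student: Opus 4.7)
The plan is to follow the standard self-testing recipe for Pauli braiding, adapted to the low-weight regime, and then to track the error accumulation carefully to produce the stated $Cn^6\epsilon^{1/4}$ bound. Since the test decomposes into a linearity sub-test and an anti-commutation sub-test, I would first extract from the hypothesis that each sub-test is passed with probability at least $1-2\epsilon$. All conclusions in what follows should be stated either as operator identities holding approximately on the shared state $\ket{\psi}$, with approximations measured in the state-dependent norm $\norm{X}_{\ket{\psi}}^2 = \bra{\psi}X^*X\ket{\psi}$.

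First, from the linearity test I would deduce that for all $W\in\{X,Z\}$ and all low-weight $a,a'$ with $|a|,|a'|,|a+a'|\le 6$, Bob's observables approximately satisfy
\[
 \tau^B(W(a))\,\tau^B(W(a'))\ket{\psi} \approx \tau^B(W(a+a'))\ket{\psi},
\]
and an analogous consistency with Alice's marginal observables. The error at this stage is $O(\sqrt{\epsilon})$ per equation, following the usual conversion from winning probability to state-distance via Cauchy--Schwarz. Next, from the anti-commutation sub-test, which embeds the Magic Square game into questions of the form $(q,a)$ with $|a|=2$, I would invoke \Cref{lemma:MS} to conclude that for any pair of positions $i<j$, the weight-two observables $\tau^B(X_iX_j)$, $\tau^B(Z_iZ_j)$, $\tau^B(X_iZ_j)$, $\tau^B(Z_iX_j)$ satisfy the Magic-Square operator relations up to error $O(\epsilon^{1/4})$ in state norm. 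In particular, one gets the crucial single-qubit anti-commutation $\tau^B(X_i)\tau^B(Z_i) \approx -\tau^B(Z_i)\tau^B(X_i)$ on $\ket{\psi}$, combined with each observable being an involution.

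With approximate linearity plus approximate anti-commutation in hand, the next step is to promote the low-weight relations to all weights. For an arbitrary $a\in\{0,1\}^n$ of weight up to $n$, I would write $W(a)$ as a telescoping product of at most $O(n)$ weight-$\leq 6$ factors and apply the approximate linearity relation inductively. Each application contributes an $O(\sqrt{\epsilon})$ error, and polynomial overhead from the group structure and from switching between the state-norm and operator descriptions accounts for the $n^6$ prefactor; this is where the bookkeeping is delicate and where I expect the main obstacle to lie.

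Finally, once I have approximate single-qubit Pauli pairs $\{\tau^B(X_i),\tau^B(Z_i)\}_{i\in[n]}$ obeying the Pauli algebra up to state-norm error $O(n^{O(1)}\epsilon^{1/4})$, I would construct the isometries $V_A$ and $V_B$ by the Gowers--Hatami rounding procedure: $V_B$ is built from controlled applications of $\tau^B(X_i),\tau^B(Z_i)$, essentially implementing a ``swap'' between an ancillary $n$-qubit register and Bob's half of the entanglement, and similarly for Alice using the Alice-Bob consistency proved earlier. Standard arguments then yield the bound
\[
 \norm{(V_A\otimes V_B)\bigl(\mathrm{Id}_{\mathcal{H}_A}\otimes \tau^B(W(a))\ket{\psi}\bigr) - \bigl(\mathrm{Id}_{\mathbb{C}^{2^n}}\otimes \sigma_W(a)\ket{\Phi^+}^{\otimes n}\bigr)\otimes\ket{aux}} \le C n^6 \epsilon^{1/4}
\]
uniformly in $W(a)\in\mathcal{I}_B$. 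The main obstacle, as noted, is obtaining a clean $n^6$ dependence rather than a worse polynomial: this requires exploiting the weight-$\leq 6$ structure so that each telescoping step involves only an $O(1)$-sized piece, and carefully bounding how isometry-level errors compose when acting on products of up to $n$ such pieces.
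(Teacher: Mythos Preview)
The paper does not actually prove this theorem: it is quoted verbatim as Theorem~18 of \cite{BMZ24} and used as a black box. The only thing the present paper adds is the observation, immediately after the statement, that a careful reading of the proofs in \cite{BMZ24} yields the explicit swap-type isometry displayed in \cref{eq:theisometry}. So there is no ``paper's own proof'' to compare your proposal against.

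That said, your outline is the standard route by which such results are obtained, and it is consistent with what the paper does extract: the Gowers--Hatami/swap isometry you describe in your final step is exactly the $V$ in \cref{eq:theisometry}, built from controlled applications of Bob's single-qubit $X$ and $Z$ observables. One point worth flagging: your heuristic for the $n^6$ exponent (``telescoping up to weight $n$'') is not quite the right accounting. The $n^6$ arises already at the level of the low-weight relations, because the test only probes strings of weight at most $6$, so the number of relevant question pairs scales like $\binom{n}{6}$; one does \emph{not} need to promote to full-weight Paulis to get the conclusion, since the theorem is only claimed for $W(a)\in\mathcal{I}_B$, i.e.\ $|a|\le 6$. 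Your step ``promote the low-weight relations to all weights'' is therefore unnecessary for the stated theorem and would in fact incur a worse polynomial than $n^6$.
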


In order to construct our extractor we require an explicit form for the isometry from \cref{thm:rigidity}.  It is implicitly given in \cite{BMZ24} and turns out to be 
\begin{equation}\label{eq:theisometry}
    V_P\ket{\psi}_{AB} = \frac{1}{\sqrt{2^{3n}}}\sum_{a,b,c \in \{0,1\}^n} (-1)^{b \cdot c} X^a_P Z^b_P \ket{\psi}_{AB}\ket{c, a+c}
\end{equation}
for $P \in \{A, B\}$ where given strings $a=(a_1, \dots, a_n) , b=(b_1, \dots, b_n)$ we define
\begin{align*}
    X^a_P &\coloneqq X_P(a_1')^{a_1}X_P(a_2')^{a_2} \cdots X_P(a_n')^{a_n}, \\
    Z^b_P &\coloneqq Z_P(b_1')^{b_1}Z_P(b_2')^{b_2} \cdots Z_P(b_n')^{b_n},
\end{align*}
with $X_P(a_i'):=\tau^P(X(0^{i-1} a_i 0^{n-i})$, denoting provers' observable when receiving a query $X(0^{i-1} a_i 0^{n-i})$, and $Z_P(b_j'):=\tau^P(Z(0^{j-1} b_j 0^{n-j}))$, denoting their observable for question $Z(0^{j-1} b_j 0^{n-j})$. We remark that this isometry is efficiently implementable in a similar way to the swap gate. \footnote{Here by efficient we mean efficient to implement given oracle access to the players as outlined in \cref{def:OracleAccess}} We will revisit this isometry, in particular $V_B$, in \Cref{sec:implement-isometry}.

\subsection{Energy test and Hamiltonian game}

We describe the Energy test (Definition 21 of \cite{BMZ24}) in Figure \ref{fig:energytest}. The verifier randomly chooses a Hamiltonian term $H_\ell$. He requests Alice to teleport the ground state to Bob and simultaneously sends the corresponding tensor product of Pauli matrices to Bob. This allows the verifier to evaluate the energy of the ground state.

\begin{figure}[h]
\rule[1ex]{\textwidth}{0.5pt}
\begin{enumerate}[label=\arabic*.]
    \item The verifier picks $ \ell \in_R [m]$ at random for a local Hamiltonian $H_\ell$, and selects a pair at random uniformly from the set $D_\ell \coloneqq \{(W, e) \in \{X, Z\}^n \times \{0,1\}^n \mid \sigma_W(e) = H_\ell\}$.
    \item The verifier tells Alice that the players are under the energy test, and sends $W(e)$ to Bob.
    \item Alice answers with two $n$-bit strings $\alpha,\beta \in \{0,1\}^n$, and Bob with $c \in \{\pm1\}$.
    \item The verifier computes a $n$-bit string $d \in \{-1,+1\}^n$ as follows: Set $d_i = (-1)^{\alpha_i}$ if $W(e)_i = Z$, $d_i = (-1)^{\beta_i}$ if $W(e)_i = X$, and $d_i = 1$ otherwise.
    \item If $c \cdot \prod_{i \in [n]} d_i  \ne \textit{sign}(\gamma_\ell)$, the verifier accepts. 
    \item Otherwise, the verifier rejects with probability $|\gamma_\ell|$.
\end{enumerate}
\rule[2ex]{\textwidth}{0.5pt}\vspace{-.5cm}
\caption{Energy Test for a $XZ$ Hamiltonian $H = \sum_\ell \gamma_\ell H_\ell$.} 
\label{fig:energytest}
\end{figure}

Now, we can define the Hamiltonian game consisting of the LWPBT and the ET.
\begin{definition}[Hamiltonian game]\label{def:HamiltonianGame}
    Let $H = \sum_{\ell \in [m]} \gamma_\ell H_\ell$ be a $n$-local Hamiltonian of XZ-type and let $p \in (0, 1)$. The Hamiltonian game $\mathcal{G}(H, p)$ is a non-local game where the players play the low-weight PBT in \Cref{fig:pbt} with probability $(1-p)$, and the players play ET in \Cref{fig:energytest} with probability $p$.
\end{definition}

\begin{figure}
    \centering
    \begin{tikzpicture}[>=stealth, thick]

      \node[draw, rectangle] (Pa) at (0,0) {$P_A[w]$};
      \node[draw, rectangle] (Pb) at (4,0) {$P_B$};
      \node[draw, rectangle] (V) at (2,-3) {$V$};

      \draw[transform canvas={xshift=-13.pt}, <-, shorten <=5pt] (Pa) -- (V) node[midway, below left, align=center] {$q_1$};
      \draw[transform canvas={xshift=-3.pt}, ->, shorten >=5pt, shorten <=3pt] (Pa) -- (V) node[midway, right, align=center] {$c_1$};
      \draw[transform canvas={xshift=3.pt}, ->, shorten <=5pt, shorten >=5pt] (Pb) -- (V) node[midway, left, align=center] {$c_2$};
      \draw[transform canvas={xshift=13.pt}, <-, shorten <=5pt] (Pb) -- (V) node[midway, below right, align=center] {$q_2$};
      \draw[double distance=3pt, nfold=3, shorten >=5pt, shorten <=5pt]  (Pa) -- (Pb) node[midway, above] {EPR};
    \end{tikzpicture}
    \caption{Hamiltonian game $\mathcal{G}(H, p)$. The verifier plays the low-weight PBT with probability $1-p$ and ET with probability $p$. Depends on the verifier's choice of the test, Alice receives a query $q_1$, which is either a (tensor product of) Pauli matrices query or a teleportation query. However, Bob always (except one query) receives a (tensor product of) Pauli matrices query $q_2$. }
    \label{fig:HamiltonianGame}
\end{figure}

Because of the rigidity of the low-weight PBT, we can ascertain that the provers' measurement is close to the Pauli matrices (up to isometry) if the provers win the low-weight PBT with high probability. However, we do not know how Alice behaves in ET. Thus, we need to define a \emph{semi-honest} strategy $S_{sh}$. 
\begin{definition}
    We refer to a strategy $\mathcal{S}_{sh}$ for $\mathcal{G}(H,p)$ as a \emph{semi-honest}\footnote{We define the \emph{honest strategy} $\mathcal{S}_h$ for $G(H,p)$ in which the players employ the honest strategy when playing LWPBT, and in the energy test, Alice honestly teleports an unknown state $\eta$ to Bob and provides the verifier with the teleportation keys. } strategy if Alice and Bob hold $n$-EPR pairs and Bob must perform $\sigma_{W(a)}$ on question $W(a)$ since he cannot distinguish questions from low-weight PBT or ET.
\end{definition} 

\section{Extractor for the Hamiltonian game}\label{Sec:pok-extractor}
In this section, we define an extractor for the Hamiltonian game of \Cref{def:HamiltonianGame}. The intuition is as follows:  Under the assumption that the provers pass the low-weight PBT with large probability, we can ascertain by \Cref{thm:rigidity} that they are honest (up to isometry) in the sense that they share sufficiently many EPR pairs and perform the correct Pauli matrices.
To obtain a witness from the provers, the extractor first commands Alice to teleport her ground state to Bob. Then it exchanges its internal state with the second prover's state by performing a ``swap gadget'' constructed by Bob's observables in the game. As a consequence, we show that the extractor succeeds in extracting the prover's ground state.

We first explore the isometry in \Cref{eq:theisometry} to implement it efficiently in a quantum circuit in \Cref{sec:implement-isometry}. Then we finally show the existence of an extractor for $XZ$ local Hamiltonian problems in \Cref{sec:show-extractor}.

\begin{figure}
\centering
    \begin{quantikz}
    \lstick{}& & \gate{\sigma_X} & & \gate{\sigma_Z} & & \gate{\sigma_X} &   \\
    \lstick{\ket{0}} & & \ctrl{-1} & \gate{H} & \ctrl{-1} & \gate{H} &\ctrl{-1} &  \\
    \end{quantikz}
    =
    \begin{quantikz}
    \lstick{}& & \gate{\sigma_Z} & & \gate{\sigma_X} &   \\
    \lstick{\ket{0}}& \gate{H} & \ctrl{-1} & \gate{H} &\ctrl{-1} &  \\
    \end{quantikz}
\caption{A quantum circuit implementing the swap gate. The top wire represents the target register, while the bottom wire is an auxiliary register initialized in~$\ket{0}$.}
\label{fig:swapgate}
\end{figure}

\subsection{The isometry as a quantum circuit} \label{sec:implement-isometry}
To build an extractor for the Hamiltonian game, we make use of the isometry $V \coloneqq V_B$ in \Cref{eq:theisometry} acting on Bob's register as a gadget. We will see that the isometry  can be interpreted as a swap gate that essentially interchanges the states between Bob's register and the extractor's register. To implement this isometry, we first initialize the extractor's register with $n$-EPR pairs. For the simplicity, we denote the Bob's state and observable by $\ket{\psi}_B$.

\begin{figure}[h]
\centering
    \begin{quantikz}[wire types={q, q, q}, classical gap=0.03cm]
    \lstick{$\ket{\psi}_B$}& \gategroup[3,steps=5,style={rounded corners, inner xsep=2pt},background,label style={label position=below,anchor=north,yshift=-0.2cm}]{{\sc swap gadget: $U$}} & \gate{Z} & & \gate{X} &&\\
    \makeebit[]{$\ket{\Phi^+_n}_{E}$} & \gate{H^{\otimes n}} & \ctrl{-1} & \gate{H^{\otimes n}} & \ctrl{-1} & \targ{0}  & \\
    &&&&& \ctrl{-1} &
    \end{quantikz}
\caption{The circuit implementing the isometry $V$ of \Cref{eq:theisometry}. The state in Bob's register is denoted by $\ket{\psi}_B$.}
\label{fig:implement-isometry}
\end{figure}

The isometry of \Cref{eq:theisometry} can be implemented as follows (see \Cref{fig:implement-isometry}): First, the register is initialized with $\ket{\psi}_B\ket{\Phi^+}^{\otimes n} = \frac{1}{\sqrt{2^n}}\sum_{a \in \{0,1\}^n} \ket{\psi}_B\ket{a, a} = \frac{1}{\sqrt{2^n}}\sum_{a \in \{0,1\}^n} \ket{\psi}_B\ket{a, a}$. Applying a Hadamard gate yields
\[
    \frac{1}{2^n}\sum_{a,b \in \{0,1\}^n} (-1)^{b \cdot a} \ket{\psi}_B\ket{b, a}.
\]
Then the controlled-$Z$ applies to the first register to obtain
\[
    \frac{1}{2^n}\sum_{a,b \in \{0,1\}^n} (-1)^{b \cdot a} Z_B(b_1')^{b_1} \cdots Z_B(b_n')^{b_n}\ket{\psi}_B\ket{b, a},
\]
where each $Z_B(b_i')$ gate is performed controlled on each bit string $b_i$ of $\ket{b} = \ket{b_1 \cdots b_n}$.
Again, applying the Hadamard gate yields
\[
    \frac{1}{\sqrt{2^{3n}}}\sum_{a,b,c \in \{0,1\}^n} (-1)^{b \cdot (a+c)} Z_B(b_1')^{b_1} \cdots Z_B(b_n')^{b_n}\ket{\psi}_B\ket{c, a}.
\]
After the controlled-$X$ gate applies to the first register, we obtain
\begin{align*}
    &\frac{1}{\sqrt{2^{3n}}}\sum_{a,b,c \in \{0,1\}^n} (-1)^{b \cdot (a+c)} \Big(X_B(c_1')^{c_1}\cdots X_B(c_n')^{c_n}\Big)\Big(Z_B(b_1')^{b_1}\cdots Z_B(b_n')^{b_n}\Big)\ket{\psi}_B\ket{c, a} \\
    &\qquad = \frac{1}{\sqrt{2^{3n}}}\sum_{a,b,c \in \{0,1\}^n} (-1)^{b \cdot c} \Big(X_B(a_1')^{a_1}\cdots X_B(a_n')^{a_n}\Big)\Big(Z_B(b_1')^{b_1}\cdots Z_B(b_n')^{b_n}\Big)\ket{\psi}_B\ket{a, a+c}.
\end{align*}
Finally, the CNOT gate applied inside the extractor's register returns the state
\[
    \frac{1}{\sqrt{2^{3n}}}\sum_{a,b,c \in \{0,1\}^n} (-1)^{b \cdot c} \Big(X_B(a_1')^{a_1}\cdots X_B(a_n')^{a_n}\Big)\Big(Z_B(b_1')^{b_1}\cdots Z_B(b_n')^{b_n}\Big)\ket{\psi}_B\ket{c, a+c}.
\]
Therefore, if we let $U$ be the unitary implementing the isometry $V(=V_B)$ of \Cref{eq:theisometry} in the circuit, then we have
\begin{align}\label{eq:theunitary}
    U\ket{\psi}_B\ket{\Phi^+_n}
    &= \frac{1}{\sqrt{2^{3n}}}\sum_{a,b,c \in \{0,1\}^n} (-1)^{b \cdot c} X_B^aZ_B^b\ket{\psi}_B\ket{c, a+c}.
\end{align}

However, we can interpret the isometry (unitary) above as the swap gate. Indeed, let us consider the swap gate shown in \Cref{fig:enhanced-swapgate}, where the bottom two wires represent the auxiliary registers initialized with $n$-EPR pairs.  The swap gate $U_{\text{swap}}$ is represented by 
\begin{equation}\label{eq:theswapgate}
    U_{\text{swap}} \ket{\psi}_B\ket{\Phi^+_n} = \frac{1}{\sqrt{2^{3n}}}\sum_{a,b,c \in \{0,1\}^{n}}(-1)^{b \cdot c}\sigma_X(a)\sigma_Z(b)\ket{\psi}_B\ket{c, a+c}.
\end{equation}
If the provers behave honestly, that is, share the $n$-EPR pairs and perform the indicated Pauli measurement, then the isometry of \Cref{eq:theisometry} (or the unitary of \Cref{eq:theunitary}) obtained in \Cref{thm:rigidity}  coincides with the true swap gate of \Cref{eq:theswapgate}.

\begin{figure}[h]
\centering
    \begin{quantikz}[wire types={q, q, q}, classical gap=0.03cm]
    \lstick{$\ket{\psi}_B$}&  \gate{\sigma_X} && \gate{\sigma_Z} & & \gate{\sigma_X} &\\
    \makeebit[]{$\ket{\Phi^+_n}$} & \ctrl{-1} & \gate{H^{\otimes n}} & \ctrl{-1} & \gate{H^{\otimes n}} & \ctrl{-1} &  \\
    &&&&&&
    \end{quantikz}
\caption{The circuit implementing the swap gate $U_{\text{swap}}$. It has $n$-EPR pairs for the auxiliary register, which makes it robust to error in the amplitude of $\ket{0 \cdots 0}$ of $\ket{\psi}_B$ compared to the model of \Cref{fig:swapgate}~\cite{McK16}.}
\label{fig:enhanced-swapgate}
\end{figure}

\subsection{Extractor}\label{sec:show-extractor}
The rigidity of the low-weight PBT (for Bob) implies the possibility of using the isometry $V$ (acting on Bob's register) to extract the EPR pairs from Bob, and receive the state that is (supposedly) teleported by Alice. Moreover, in order to implement $V$ in  \Cref{eq:theisometry}, the extractor uses black-box access to Bob's observables.

In more details: Suppose that $\sigma_W(e) = H_\ell$  for some $W \in \{X, Z\}^n$ and $e \in \{0,1\}^n$. Let us denote by $M_{a,b}$ Alice's measurement with outcome $\{a,b\}$ as an answer to the teleportation query. Then, by post-selecting the state after teleportation, \Cref{thm:rigidity} implies that
\[
    \bra{\psi} M_{a,b} \otimes V^\dagger H_\ell V \ket{\psi}_{AB} 
        \approx
        \bra{\psi} M_{a,b} \otimes W_B(e) \ket{\psi}_{AB}.
\]
Remark that the left-hand side corresponds to the expectation over the honest Bob's answer conditioned on Alice's measurement $(a,b)$, while the right hand side corresponds to the expectation over Bob's answer in the actual protocol conditioned on Alice's measurement $(a,b)$. Then we claim that if the provers succeed in the Hamiltonian test with large probability, then Alice is actually teleporting to the extractor's register a sufficiently low energy state relative to the local Hamiltonian $H$. 

\begin{figure}[h]
\centering
   \begin{quantikz}[wire types={q, q, q, q, q, q}]
        \lstick[2]{$M_A$} &\gate[3]{M} \hphantom{very wide}&&& \ctrl[vertical wire=q]{4} && \rstick{$a$} \\
        \lstick{} &&&& & \ctrl[vertical wire=q]{3} & \rstick{$b$} \\
        \lstick[2]{$\ket{\psi}_{AB}$} &&  \\
        & \gate[3]{U} \hphantom{very wide} &\\
        \makeebit[]{$\ket{\Phi^+_n}_{E}$} &&&&  \gate{\sigma_X^a} & \gate{\sigma_Z^b} & \rstick{$\zeta$\\output state} \\
        &&
    \end{quantikz}
\caption{The model of our knowledge extractor. The top two wires represent the public message register that Alice reads. The middle two represent Alice's and Bob's register respectively.
The bottom two wires correspond to the extractor's register initialized in state $\ket{\Phi_n^+}_E$. Since the extractor can access to the controlled version of provers' observables coherently, it makes Bob perform a swap gadget $U$ of \Cref{fig:implement-isometry}, while making Alice perform quantum teleportation $M$. The extractor returns the output state $\zeta$ in the output register.}
\label{fig:extractor-circuit}
\end{figure}

We find the explicit form of the extractor's output state $\zeta$ in \Cref{fig:extractor-circuit}.
Let us set the circuit's initial state $\rho$ by
\[
    \rho = \ket{q}\bra{q} \otimes \ket{\psi}\bra{\psi}_{AB} \otimes\ket{\Phi^+_n}\bra{\Phi^+_n}_E,
\]
where state $\ket{q}$ denotes the classical message that Alice reads to perform the quantum teleportation, $\ket{\psi}_{AB}$ is the (purified) provers' sharing initial state, and $\ket{\Phi^+_n}_E$ is the $n$-EPR pairs prepared in the extractor's register.
Then state after implementing the unitary gate $U$ in \Cref{fig:extractor-circuit} becomes
\[
    (I_{\mathcal{M}_A} \otimes U)\rho (I_{\mathcal{M}_A} \otimes U)^\dagger = \ket{q}\bra{q} \otimes U(\ket{\psi}\bra{\psi}_{AB} \otimes\ket{\Phi^+_n}\bra{\Phi^+_n}_E)U^\dagger= \ket{q}\bra{q} \otimes \rho_1,
\]
where $\rho_1 = U(\ket{\psi}\bra{\psi}_{AB} \otimes\ket{\Phi^+_n}\bra{\Phi^+_n}_E)U^\dagger$.
Simultaneously, the first prover performs the quantum teleportation. Let $q_{a, b}$ be a non-negative number to denote a probability of the teleportation key outcome being a pair of $(a, b)$. Then we decompose the POVM $M$ into a family of mutually disjoint POVMs $\{ M_{a, b}\}_{a, b \in \{0,1\}^{n}}$, that is, $M = \sum_{a, b} \sqrt{q_{a, b}} M_{a, b}$. Note that each POVM $M_{a, b}$ collapses the state $\ket{q}\bra{q} \otimes \rho_1$ into $\ket{a, b}\bra{a, b} \otimes \rho_{a, b}$. 
Consequently, the state is transformed as follows.
Let us write 
\begin{align*}
    M_{a,b}\ket{q}\bra{q} \otimes \rho_1M_{c,d}^\dagger = 
    \begin{cases}
        \ket{a, b}\bra{b, a} \otimes \rho_{a,b}, & \text{ if } (a,b) = (c,d) \\
        \ket{a, b}\bra{d, c} \otimes \rho_{a,b, c,d} & \text{ if } (a,b) \neq (c,d).
    \end{cases}
\end{align*}
The state is transformed as follows.
\begin{align*}
    (M \otimes U)\rho (M \otimes U)^\dagger 
    &= M (\ket{q}\bra{q} \otimes \rho_1) M^\dagger \\
    &= \sum_{a, b.c, d} \sqrt{q_{a, b} q_{d, c}}\ket{a, b}\bra{c, d} \otimes \rho_{a,b,c,d}.
\end{align*}
At the end of the circuit, the extractor corrects the state in the output register by performing controlled Pauli matrices.
\[
    \sum_{a, b.c, d} \sqrt{q_{a, b} q_{c,d}}\ket{a, b}\bra{d, c} \otimes \sigma_Z^b\sigma_X^a\rho_{a,b,c,d}\sigma_X^c\sigma_Z^d.
\]
By tracing out to the output register, we have the output state $\zeta$ in the following form.
\begin{equation}\label{eq:outputstate}
    \zeta = \sum_{a, b} q_{a, b} \tr_{\overline{\text{out}}}( \sigma_Z^b \sigma_X^a \rho_{a, b} \sigma_X^a \sigma_Z^b)
    = \sum_{a, b} q_{a, b} \sigma_Z^b \sigma_X^a \rho_{a, b} \sigma_X^a \sigma_Z^b,
\end{equation}
where, in the second equality, we abuse the notation to identify $\sigma_Z^b \sigma_X^a \rho_{a, b} \sigma_X^a \sigma_Z^b$ with $\tr_{\overline{\text{out}}}( \sigma_Z^b \sigma_X^a \rho_{a, b} \sigma_X^a \sigma_Z^b)$. Overall, closer to honest the provers are (i.e. share the $n$-EPR pairs, and honestly perform the Bell measurement and Pauli matrices), closer to the ground state the extracted state becomes. 

Next, we estimate the quality of the extracted state $\zeta$ by relating it to the winning probability of the Hamiltonian game. In detail, we claim that the provers' winning probability can be upper-bounded by the winning probability when using $\zeta$ as a witness.

\begin{lemma}\label{lem:bmz-Upper-Bound Lemma}
Let $H=\frac{1}{m}\sum^m_{\ell=1} \gamma_\ell H_\ell$ be a $6$-local, $n$-qubit Hamiltonian of $XZ$-type. Assume that $\zeta$ is the output state of the extractor. For any $\eta\in(0,1)$, let $p=\frac{4\eta^{\nicefrac{3}{4}}}{3^{\nicefrac{3}{4}}(1+C)n^{6}}$ where $C$ is the constant given in \Cref{thm:rigidity}. Then
\begin{align*}
    \omega\big(\mathcal{G}(H,p) \big) \leq \omega(S_{sh}) + \eta,
\end{align*}
where $\omega(S_{sh})=1-p\left(\frac{1}{2m} \sum_{\ell\in[m]} |\gamma_\ell | +\frac{1}{2} \tr(H\zeta) \right)$.
\end{lemma}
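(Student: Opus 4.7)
The plan is to decompose the winning probability of $\mathcal{G}(H,p)$ into the contributions of the LWPBT (weight $1-p$) and ET (weight $p$). Writing $\epsilon$ for the failure probability of LWPBT under the given strategy and $r_a$ for the rejection probability in ET, we have $\omega(\mathcal{G}(H,p)) = (1-p)(1-\epsilon) + p(1-r_a)$. A direct unfolding of \Cref{fig:energytest} shows that the semi-honest strategy (with state $\zeta$) rejects in ET with probability exactly $r_{sh} := \tfrac{1}{2m}\sum_\ell |\gamma_\ell| + \tfrac{1}{2}\tr(H\zeta)$, and wins LWPBT with certainty, so $\omega(S_{sh}) = 1 - p\,r_{sh}$. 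The target inequality therefore reduces to
\begin{equation*}
    p(r_{sh} - r_a) \leq (1-p)\epsilon + \eta.
\end{equation*}
The case $r_{sh}\leq r_a$ is immediate, so the non-trivial regime is $r_{sh} > r_a$.

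The key step is to bound the gap $r_{sh} - r_a$ by means of \Cref{thm:rigidity}. The crucial observation is that the isometry $V_B$ predicted by the rigidity theorem is precisely the ``swap gadget'' that the extractor implements in \Cref{fig:implement-isometry}; hence the extracted state $\zeta$ on the extractor's register (see \Cref{eq:outputstate}) agrees, up to an error of order $Cn^{6}\epsilon^{1/4}$ in norm, with the state an honest Bob would hold during ET when Alice runs her actual teleportation measurement. Translating $\ell^{2}$-closeness of vectors to closeness of expectation values of the bounded observables $\sigma_W(e)$, and averaging over $\ell$ while using $|\gamma_\ell|\leq 1$, yields $r_{sh} - r_a \leq C n^{6} \epsilon^{1/4}$, after absorbing small multiplicative constants into $C$.

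The final step is an elementary single-variable optimization: it suffices to check that $p\,Cn^{6}\epsilon^{1/4} - (1-p)\epsilon \leq \eta$ for every $\epsilon\in[0,1]$. The function $\epsilon \mapsto p\,Cn^{6}\epsilon^{1/4} - (1-p)\epsilon$ attains its maximum at $\epsilon^{\ast} = \bigl(p C n^{6}/(4(1-p))\bigr)^{4/3}$, with maximum value $\tfrac{3}{4^{4/3}}(pCn^{6})^{4/3}(1-p)^{-1/3}$. With $p = 4\eta^{3/4}/(3^{3/4}(1+C)n^{6})$ one verifies that this maximum is at most $\eta$: the $(1+C)$ (in place of $C$) in the denominator of $p$ absorbs the $(1-p)^{-1/4}$ factor produced by the optimization. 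The principal obstacle is the middle paragraph, where one must track carefully how the extractor's quantum circuit in \Cref{fig:extractor-circuit}, the isometry promised by \Cref{thm:rigidity}, and the post-teleportation Pauli corrections interact, so that $\tr(H_\ell\zeta)$ genuinely corresponds to the honest-Bob expectation that rigidity lets us compare with the actual strategy's expectation. Everything else is bookkeeping and the convex-optimization calculation above.
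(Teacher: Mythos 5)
Your proof is correct and follows essentially the same route as the paper: decompose the value into the LWPBT and ET contributions, invoke \Cref{thm:rigidity} (with the Cauchy--Schwarz translation from state closeness to expectation-value closeness, worked out in \Cref{sec:append}) to bound the gap between the semi-honest and actual ET rejection probabilities by $Cn^6\epsilon^{1/4}$, and then verify an elementary inequality in $\epsilon$. The only cosmetic difference is the last step, where the paper uses a four-term AM--GM ($\eta/3+\eta/3+\eta/3+\delta_1 \geq 4(\eta^3\delta_1/27)^{1/4}$) after absorbing $p\delta_1$ into $pn^6\delta_1^{1/4}$, whereas you optimize the function directly by calculus; both give the same conclusion (note the exponent in your absorbed factor should be $(1-p)^{-1/3}$, not $(1-p)^{-1/4}$).
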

\begin{proof}
    Suppose the provers are employing a strategy $\mathcal{S}=(\tau^A,\tau^B,\ket{\psi}_{AB})$ for $G(H,p)$ that wins LWPBT with probability $1-\delta_1$. Then they win the energy test with probability at most $\delta_2 +1-\frac{\sum_\ell |\gamma_\ell|}{2m} - \frac{\tr(H\zeta)}{2}$ with $\delta_2 \leq Cn^6\delta_1^{1/4}$ by \Cref{thm:rigidity} (See \Cref{sec:append} for details).
Then for $p \coloneqq \frac{4n^{-6}\eta^{\nicefrac{3}{4}}}{(1+C)3^{\nicefrac{3}{4}}}$ with $\eta\in(0,1)$, we have $\eta+\delta_1=\eta/3+\eta/3+\eta/3+\delta_1 \geq 4(\frac{\eta^3\delta_1}{3^3})^{\nicefrac{1}{4}}=p(1+C)n^{6}\delta_1^{\nicefrac{1}{4}}$. It follows that
\begin{align*}
    p\delta_2-(1-p)\delta_1 &\leq pCn^{6}\delta_1^{\nicefrac{1}{4}}+p\delta_1-\delta_1 \\
    &\leq pCn^{6}\delta_1^{\nicefrac{1}{4}}+pn^{6}\delta_1^{\nicefrac{1}{4}}-\delta_1
    = p(1+C)n^{6}\delta_1^{\nicefrac{1}{4}}-\delta_1 \\
    &\leq \eta.
\end{align*}
Hence, the overall winning probability is bounded as follows
\begin{align*}
    \omega(\mathcal{S}) &\leq(1-p)(1-\delta_1) + p(\delta_2 +1-\frac{\sum_\ell |\gamma_\ell|}{2m} - \frac{\tr(H\zeta)}{2}) \\
    &= \omega(\mathcal{S}_{sh}) +p \delta_2 -(1-p)\delta_1 \\
    &\leq \omega(\mathcal{S}_{sh})+\eta.
\end{align*}
\end{proof}

Our next lemma is required to obtain a knowledge function which does not depend on the energy gap $\beta-\alpha$ from an XZ Hamiltonian problem $(H,\alpha,\beta)$.

\begin{lemma}\label{lem:gamma}
    Let $H=\frac{1}{m} \sum_{\ell\in[m]} \gamma_\ell  H_l$ be an XZ-Hamiltonian  for which $\gamma := \frac{1}{m} \sum_{\ell\in[m]} |\gamma_\ell | \leq 1$. For any value  $0 \leq \alpha \leq \gamma$ there exists $\eta \in(0,1)$ such that for $p$ as defined from $\eta$ in \Cref{lem:bmz-Upper-Bound Lemma}
    \begin{equation*}
        1 - p^*(\frac{1}{2m}\sum|\gamma_\ell| + \frac{\alpha}{2}) = 1 - O(1/\text{poly}(n))
    \end{equation*}
\end{lemma}

\begin{proof}
Suppose that we can choose $\eta$ such that  the following relation holds.
\begin{equation}\label{eq:parameter-eta}
    \eta = \frac{p(\gamma + \alpha)}{2}.
\end{equation} 
In this case we have  $1 - p^*(\frac{1}{2m}\sum|\gamma_\ell| + \frac{\alpha}{2}) = 1 - \eta$. To achieve this we need a choice of $\eta$ which satisfied both \Cref{eq:parameter-eta} as well as $p=\frac{4\eta^{\nicefrac{3}{4}}}{3^{\nicefrac{3}{4}}(1+C)n^{6}}$. The explicit formulas for parameters $\eta$ and $p$ which satisfy both equations, denoted by $\eta^*$ and $p^*$ respectively, are:
\begin{align}
    \eta^* = \frac{16(\gamma + \alpha)^4}{27(1+C)^4n^{24}},\label{eq:eta} \\
    p^* = \frac{32(\gamma + \alpha)^3}{27(1+C)^4n^{24}}. \label{eq:p}
\end{align}
Note, since $\gamma \leq 1 < C$ we have $\eta^* <1$ for all values of $n$.
\end{proof}

\begin{figure}[H]
\rule[1ex]{\textwidth}{0.5pt}
Let $H$ be a XZ local Hamiltonian, and $P^* = (P_A, P_B)$ be a non-communicating two-prover with a strategy $\mathcal{S}$ such that $\omega(\mathcal{G}, \mathcal{S}) \geq 1 - \epsilon$ for the Hamiltonian game $\mathcal{G} = \mathcal{G}(H, p^*)$ (\Cref{def:HamiltonianGame}).
\begin{enumerate}[label=\arabic*]
  \item Initialize the state in the circuit to $\ket{\psi}_{AB}\ket{\Phi^+_n}_E$.
  \item Request Alice to perform the quantum teleportation by sending the classical message $\ket{q}$ through the message register $M_A$.
  \item Simultaneously, implement the swap gadget $U$ of \Cref{eq:theunitary} by black-box access to Bob's observables.
  \item Correct the state in the extractor's output register by applying controlled Pauli matrices $\sigma_X$ and $\sigma_Z$, where the message register $M_A$ acts as a control, to obtain $\zeta$.
  \item Output $\zeta$.
\end{enumerate}
\rule[2ex]{\textwidth}{0.5pt}\vspace{-.5cm}
\caption{Knowledge extractor $E$.}
\label{fig:extractor}
\end{figure}

\begin{theorem}\label{prop:main}
 Let $H=\frac{1}{m} \sum_{\ell\in[m]} \gamma_\ell  H_l$ be an XZ-Hamiltonian  for which $\gamma := \frac{1}{m} \sum_{\ell\in[m]} |\gamma_\ell | \leq 1$. For any value $0 \leq \alpha \leq \gamma$ take $p^*$ and $\eta^*$ to be as defined in \cref{eq:eta} and \cref{eq:p}. If provers $P^*$, with some strategy $\mathcal{S}$, wins $\mathcal{G}(H,p^*)$ with probability $1 - \epsilon > 1-\eta^*$, then the extractor of \Cref{fig:extractor},  with oracle access to $\mathcal{S}$, outputs a quantum state $\zeta$ such that $\tr(H\zeta) \leq \alpha + O(\poly(n))\,\epsilon.$
\end{theorem}

\begin{proof}
Suppose that the provers $P^*$ with some strategy $\mathcal{S}$ wins the Hamiltonian game with probability $\omega\big(\mathcal{G}(H,p), \mathcal{S} \big) = 1 - \epsilon \geq 1 - \eta^*$. By \Cref{lem:bmz-Upper-Bound Lemma} implies that the extractor $E^\mathcal{S}$ outputs the state $\zeta$ such that 
 \begin{equation}
     \quad  1 - \epsilon \leq 1-\frac{p^*}{2}\Big(\gamma + \tr(H\zeta) \Big) + \eta^* \quad (\leq 1).
 \end{equation}
It follows from a choice of parameter $\eta^*$ in \Cref{eq:parameter-eta} that we obtain
\begin{equation}
    \tr(H\zeta) \leq \left(\frac{2\eta^*}{p^*} - \gamma\right) + \frac{2 \epsilon}{p^*} = \alpha + \frac{2\epsilon}{p^*} = \alpha + Dn^{24}\epsilon,
\end{equation}
where $D = \frac{27(1+C)^4}{16(\gamma + \alpha)^3}$.

\end{proof}

\begin{remark}
We note that \Cref{prop:main} does not apply for $XZ$ local Hamiltonians instances $x=(H, \alpha, \beta)$ with $\alpha <0$, such as when $H=-I$ and $\alpha=\lambda_0(H)=-1$. For our purposes it suffices to consider the implications of $\cref{prop:main}$ for certain $\QMA$ relations $R_{Q, \gamma}$. More specifically, for any instance $x$ in a $\QMA$ language $A=(A_{yes}, A_{no})$ we may pick the verification circuit $V_x$ to be as specified by Theorem 27 of ~\cite{BMZ24}. Applying the circuit to Hamiltonian construction to $V_x$ will then give an $XZ$ local Hamiltonian instance $(H_x, \alpha, \beta)$ with $0 \leq \alpha < \beta$. For such instances we can obtain a knowledge error $\kappa$ of the form $\kappa=1 - O(1/\text{poly}(n))
$. 
\end{remark}

\begin{theorem}\label{thm:main}
Consider an instance $x$ of a $\QMA$ problem $A = (A_{\text{yes}}, A_{\text{no}})$, and define $(H_x, \alpha, \beta)$ as the XZ Hamiltonian obtained by applying the circuit-to-local Hamiltonian construction to the verification circuit $V_x$ of \cite{BMZ24}. Additionally, let $Q$ denote the standard $\QMA$ verification circuit for  $(H_x, \alpha, \beta)$. There exists polynomials $q, r$ such that game $\mathcal{G}(H, p^*)$ is a statistical zero-knowledge classical proof of quantum knowledge for QMA relation $R_{Q, 1-q(n, \epsilon)}$ with knowledge error $\kappa=1-\frac{1}{r(n)}$. 
\end{theorem}

\begin{proof}
We define our knowledge error as $\kappa :=1 - \hat{\eta}$, where
$\hat{\eta}:=\frac{16\gamma^4}{27(1+C)^4n^{24}}$. Note that since $\alpha \geq 0$ we have $\hat{\eta} \leq \eta^*$. Suppose provers $P^*$, with some strategy $\mathcal{S}$, wins $\mathcal{G}(H,p^*)$ with probability  $1 - \epsilon \geq \kappa$. Then by \cref{prop:main} the output state, $\zeta$, of the extractor satisfies $\tr(H\zeta) \leq  \alpha + Dn^{24}\epsilon$, where $D= \frac{27(1+C)^4}{16(\gamma + \alpha)^3}$. Such a state will be accepted by $Q$ with probability at least $1-\frac{\alpha + Dn^{24}\epsilon}{m}$.

To establish statistical zero-knowledge, we observe that the game $\mathcal{G}(H, p^*)$ differs from the Hamiltonian game of~\cite{BMZ24} only in its question distribution. The simulator constructed in~\cite{BMZ24} is designed to handle malicious verifiers that sample questions from any efficiently sampleable distribution. Therefore, it applies directly to $\mathcal{G}(H, p^*)$ without modification.
\end{proof}

\bibliographystyle{bibtex/bst/alphaarxiv.bst}
\bibliography{bibtex/bib/quasar-full.bib,
              bibtex/bib/quasar.bib,
              bibtex/bib/quasar-more.bib}
\appendix
\section{Upper bound of the winning probability}\label{sec:append}
Recall that from \Cref{eq:outputstate} we have the output state of the extractor
\begin{equation*}
    \zeta 
    = \sum_{a, b} q_{a, b} (\sigma_Z^b \sigma_X^a \rho_{a, b} \sigma_X^a \sigma_Z^b)
    = \sum_{a, b} q_{a, b} \zeta_{a, b},
\end{equation*}
where we define $\zeta_{a, b} = \sigma_Z^b \sigma_X^a \rho_{a, b} \sigma_X^a \sigma_Z^b$. We want to estimate the energy of this state. To do so, we relate it to the winning probability of the Hamiltonian game.
Let $H_\ell$ be the $\ell$th  term of the given local Hamiltonian $H$. Consider the semi-honest strategy $\mathcal{S}_{sh}$ of the provers (this coincides with the honest provers with witness $\zeta$). Since it can be seen that Bob performs the correct Pauli measurements with respect to state $\zeta$, we may assume that Bob has witness $\zeta$ when ET being conducted. When he receives the query $W(e) = H_\ell$, then the expectation over his respond $c$ is given by 
\begin{align*}
    \tr(H_\ell \zeta) = \sum_{a,b} q_{a,b}\tr(H_\ell \sigma_Z^b \sigma_X^a \rho_{a, b} \sigma_X^a \sigma_Z^b ) = \sum_{a,b} q_{a,b}d(a, b, H_\ell) \tr(H_\ell \rho_{a, b}),
\end{align*}
where $d(a, b, H_\ell)$ is the teleportation correction term as the function of $a, b$, and $H_\ell$ to return either $+1$ or $-1$. 

On the other hand, \Cref{thm:rigidity} states that the above equation is $O(n^6)\epsilon^{\nicefrac{1}{4}}$-close to $\tr((M_{a,b} \otimes H_\ell) \rho)$, which is the expectation over Bob's respond in the actual protocol conditioned on the outcome of measurement being $(a, b)$. Indeed, 
\begin{align*}
    \tr(H_\ell \rho_{a, b}) 
    &= \tr(H_\ell \tr_{A}((M_{a, b} \otimes V) \rho (M_{a, b} \otimes V)^\dagger)) \\
    &= \tr((I_A \otimes H_\ell) (M_{a, b} \otimes V) \rho (M_{a, b} \otimes V)^\dagger)\\
    &= \tr\left((M_{a, b} \otimes V^\dagger H_\ell V) \rho \right) \\
    &\approx_{O(n^6)\epsilon^{\nicefrac{1}{4}}} \tr((M_{a, b} \otimes W_B(e) \rho )
\end{align*}
by applying the Cauchy–Schwarz inequality and \Cref{thm:rigidity}.
Therefore, we have
\begin{equation}\label{eq:partialrigid}
    | \tr(H_\ell \zeta) - \sum_{a, b} q_{a, b} d(a, b, H_\ell)\tr(M_{a, b} \otimes W_B(e) \rho ) | \leq O(n^6)\epsilon^{\nicefrac{1}{4}}.
\end{equation}
However, the second term in the left hand side is the expectation over the transcripts in the actual protocol. Thus, the claim follows.

Let us denote by $p_{\text{loss}}^{(\et)}$  the losing probability in ET in the semi-honest strategy (or the honest strategy with witness $\zeta$). Similarly, we denote by $p_{\text{loss, act}}^{(\et)}$ the losing probability in ET of the actual provers. Recall that the losing probability in ET is given by
\begin{equation}
    \mathbb{E}_{\ell \leftarrow [m], H_\ell \leftarrow D_\ell, (a, b) \leftarrow K}\frac{|\gamma_\ell| + \gamma_\ell \mathbb{E}_{c \mid a, b} [c \cdot d]}{2},
\end{equation}
where $K$ denotes the distribution of the measurement outcome $(a, b)$, and so $\pr[(a, b) \leftarrow K] = q_{a, b} \geq 0$. We remark that if the provers behave semi-honestly, then we have $\mathbb{E}_{c \mid a, b} [c \cdot d] =  \tr(H_\ell \sigma_Z^b \sigma_X^a \zeta_{a, b} \sigma_X^a \sigma_Z^b )$. Therefore, the semi-honest prover fails ET with probability
\begin{align*}
     p_{\text{loss}}^{(\et)} 
     &= \mathbb{E}_{\ell \leftarrow [m], H_\ell \leftarrow D_\ell}\frac{|\gamma_\ell|}{2} + \frac{\gamma_\ell \sum_{a, b} q_{a, b}\tr(H_\ell \sigma_Z^b \sigma_X^a \zeta_{a, b} \sigma_X^a \sigma_Z^b)}{2} \\
     &= \mathbb{E}_{\ell \leftarrow [m], H_\ell \leftarrow D_\ell}\frac{|\gamma_\ell|}{2} + \frac{\gamma_\ell \tr(H_\ell \zeta)}{2} \\
     &\approx_{O(n^6)\epsilon^{\nicefrac{1}{4}}} \mathbb{E}_{\ell \leftarrow [m], H_\ell \leftarrow D_\ell} \left(\frac{|\gamma_\ell|}{2} + \frac{\gamma_\ell \sum_{a, b} q_{a, b} d(a, b, H_\ell)\tr((M_{a, b} \otimes W_B(e)) \rho )}{2} \right), \text{ where } W(e)=H_\ell\\
     &= \mathbb{E}_{\ell \leftarrow [m], H_\ell \leftarrow D_\ell, (a, b) \leftarrow K} \left(\frac{|\gamma_\ell|}{2} + \frac{\gamma_\ell \mathbb{E}_{d \mid a, b} [c \cdot d]}{2} \right) \\
     &= p_{\text{loss, act}}^{(\et)} ,  \\
\end{align*}
where the approximation in the third line follows from \Cref{eq:partialrigid}.
Therefore, the winning probability of the actual prover is at most $\omega(\mathcal{S}_{sh}) + O(n^6)\epsilon^{\nicefrac{1}{4}}$, where 
\[
    \omega(\mathcal{S}_{sh}) = 1-p\left(\frac{1}{2m} \sum_{\ell\in[m]} |\gamma_\ell | +\frac{1}{2} \tr(H\zeta) \right).
\] 
\end{document}